\documentclass[12pt]{article}
\usepackage{mathtools,amssymb}
\usepackage{tikz}
\usepackage{amsmath}
\usepackage{amsthm}
\usepackage{geometry}
\usepackage{tgpagella}
\usepackage[title]{appendix}
\usepackage[round]{natbib}
\usepackage{setspace}
\usepackage{float}
\usepackage[ruled,vlined]{algorithm2e}
\usepackage{hyperref}

\newtheorem{defn}{Definition}
\newtheorem{prop}{Proposition}
\newtheorem{exmp}{Example}

\newtheorem{thm}{Theorem}
\newtheorem{claim}{Claim}
\newtheorem{lemma}{Lemma}

\newtheorem{axiom}{Axiom}

\newcommand{\ch}{\operatorname{Ch}}

\title{College Admissions with Scholarship}
\author{Charles Po-Cheng Huang\footnote{Department of Economics, National Central University, No. 300, Zhongda Rd., Zhongli District, Taoyuan City, Taiwan 320317. Contact: cpchuang@ncu.edu.tw. This paper is revised from my dissertation, I want to thank Bumin Yenmez, Rahul Deb, and Utku Ünver for their guidance and support. I thank Fuhito Kojima, Chen-Yu Pan, Kenzo Imamura, and Kentaro Tomoeda for their superb comments, and the participants at the 18th EGSC at Washington University in St. Louis, RCHSS at Academia Sinica, Asia Game Theory Conference 2026.}} 
\date{September 4, 2026}

\begin{document}
	\maketitle
	
	\begin{abstract}
    Merit-based scholarships are widely used to attract high-achieving students, but do they guarantee an improvement in the merit of a university's student pool? We study this question in centralized college admissions using a matching with contracts framework. We first introduce the \emph{scholarship choice rule}, which processes scholarship levels from highest to lowest and selects the highest-merit students at each level. We show that \emph{scholarship feasibility}, \emph{scholarship maximality}, and \emph{no justified envy} uniquely characterize this rule. We further characterize the student-proposing deferred-acceptance mechanism based on this rule as the unique mechanism, up to outcome equivalence, satisfying the extensions of the three axioms, together with individual rationality and strategy-proofness. We then study the welfare effects of scholarship provision. In general matching markets, a university may be matched with lower-merit students after introducing a scholarship. Finally, when all schools use the scholarship choice rule and every student is eligible for every scholarship level a school offers, we show that a common merit ranking is necessary and sufficient for scholarship provision to guarantee, for every student-preference profile satisfying within-school monotonicity, a weak improvement in the scholarship-providing school's student pool.

	\vspace{1cm}
    \noindent \textbf{Keywords:} College admissions; School choice; Matching with contracts; Choice rule.
    \end{abstract}

	\section{Introduction}\label{sec:intro}
    Scholarship provision is an important component of higher-education policy in many countries. In Hungary, applicants may seek admission to the same university program under either a state-funded position or a self-financed position. Brazil's University for All Program (ProUni) awards full and partial scholarships at private universities using applicants' national-examination performance together with income eligibility requirements. Chile operates the Academic Excellence Scholarship, which provides tuition support to students graduating in the top ten percent of their secondary-school cohorts, subject to socioeconomic eligibility. Foundation (\emph{vakıf}) universities in Turkey offer 25, 50, or 100 percent tuition waivers based on entrance-examination scores. Taiwanese universities operate dedicated Centers of Admissions and Strategy to manage scholarship programs that target high-scoring students. Although these programs differ in their institutional details, they illustrate the widespread use of academic achievement in allocating financial support for higher education. This prevalence motivates our central question: does introducing scholarships necessarily enable a university to admit a higher-merit student body?
    
    This paper studies merit-based scholarship provision in centralized college admissions markets using the matching with contracts framework of \citet{hatfield2005matching}. We model admission with or without a scholarship as distinct contracts and assume students prefer scholarships to non-scholarship admission at the same institution, a condition we called \emph{within-school monotonicity}. To specify how schools choose among these contracts, we introduce the \emph{scholarship choice rule}. The rule processes scholarship levels from highest to lowest, selecting the highest-merit students at each level up to the capacity of that level, and then fills the school's remaining capacity with the highest-merit students without scholarships. We characterize this rule using three properties: \emph{scholarship feasibility}, \emph{scholarship maximality}, and \emph{no justified envy}. Scholarship feasibility requires the school to respect student, total, and scholarship-level capacities. Scholarship maximality requires that if a contract is rejected despite available capacity at its scholarship level, then the student must have been selected at a higher scholarship level. No justified envy requires that if a student's contract is rejected at a scholarship level and the student is not selected at a higher level, then every student selected at that level must have higher merit. These three properties uniquely characterize the scholarship choice rule (Proposition~\ref{prop:characterization}).
    
    We next move from the choice rule to matching mechanisms. Although the proposed choice rule is not path independent, we construct a path-independent and size-monotonic completion following \citet{hatfield2015hidden}. The completion allows us to apply existing results to obtain stability and strategy-proofness. We then characterize the student-proposing deferred-acceptance mechanism based on the scholarship choice rule, and we show that, on the domain of student preferences satisfying within-school monotonicity, this mechanism is the unique mechanism, up to outcome equivalence, that satisfies the extensions of scholarship feasibility, scholarship maximality, and no justified envy, as well as individual rationality and strategy-proofness (Proposition~\ref{prop:characterizeMech}).
    
    Having established the institutional and mechanism foundations, we turn to the welfare effects of scholarship provision. Our main result shows that, in general matching markets, offering scholarships can backfire. A university may end up matched with lower-merit students after introducing a scholarship program. When a university introduces a scholarship, some students may shift from competing for non-scholarship seats to competing for scholarship seats. This reallocation can create vacancies at other universities and trigger a chain reaction in the matching. A student who would have attended university $s$ without the scholarship may instead choose a competing university, while $s$ awards the scholarship to a student it would have admitted anyway. The resulting vacancies may then be filled by students with lower merit scores than those originally matched with $s$. We provide an explicit example demonstrating this phenomenon (Theorem~\ref{thm:neg}). The negative result applies to any school choice rule induced by school-specific merit rankings, suggesting that additional structure on priorities is required to guarantee scholarship provision to improve the student body. This result relates to \citet{hatfield2016improving}, who show that improving a school in students' preference rankings need not improve the set of students assigned to it according to the school's priority ranking.
    
	We therefore ask what restrictions restore the intended effect of scholarships. We first restrict the domain of the set of contracts by assuming \emph{universal scholarship eligibility}, which requires that every student must be eligible to compete for every scholarship level a school offers. Next, we restrict the priorities ranking by assuming a common ranking: all schools rank students in the same order. Although a common ranking may arise naturally in centralized systems that use national entrance examinations, it remains a restrictive condition. In particular, none of the country examples discussed above satisfies the common ranking condition. Our main welfare result provides a necessary-and-sufficient characterization. When all schools use the scholarship choice rule and universal scholarship eligibility is satisfied, a common ranking is necessary and sufficient to ensure that, for every student-preference profile satisfying within-school monotonicity, introducing a scholarship weakly improves the scholarship-providing school's student pool in terms of merit (Theorem~\ref{thm:commonrank}).
	
	\subsection*{Related Literature}
	The college admission problem was first introduced by \citet{gale1962college}, where they considered a matching problem without contracts and assumed colleges have responsive preferences.\footnote{Responsive preference implies that a college chooses students according to a predetermined ranking up to capacity.} They proposed the well-known deferred-acceptance (DA) algorithm to find a stable matching. \citet{hatfield2005matching} generalized the model by incorporating contracts. They showed that when colleges' choice rules satisfy substitutes or path independence, a stable matching exists (see also \citealp{aygun2013matching}; \citealp{hatfield2008matching}; \citealp{chambers2017choice}). Subsequent literature weakened the assumption of substitutes by using weak substitutes or substitute completion (\citealp{hatfield2010substitutes}; \citealp{hatfield2015hidden}). Moreover, \citet{yenmez2018college} weakened path independence by introducing a path-independence modification. \citet{Hatfield2021Stability} introduce observable substitutability and related conditions under which the cumulative-offer mechanism is the unique stable and strategy-proof mechanism. Their results allow us to use the completion of the scholarship choice rule to establish stability and strategy-proofness and to characterize the associated mechanism on our restricted preference domain.
	
	\citet{abizada2016stability}, \citet{abizada2026college}, and \citet{afacan2020graduate} study college admissions in which financial support is part of the assignment, the first two through budget constraints that generate complementarities between students and the last by treating support options as contracts in a graduate admission problem. Their focus is primarily on the existence of stable and strategy-proof mechanisms under these constraints. \citet{Biro2025Large} study college admissions in which students may attend the same program under different financial terms. Using the Hungarian admission system, they analyze the size and composition of the stable core and compare merit-based and non-merit-based stable allocations. In contrast to this literature, we study how introducing scholarships affects the merit composition of a school's student pool. Our scholarships setting also generates complementarities because students compete for limited scholarship seats at each level, making the selection of one student affecting the selection of others.

	Our scholarship choice rule is related to the meritorious horizontal choice rule of \citet{sonmez2022affirmative}. While their reserve protections are minimum guarantees associated with agents' traits, scholarship levels in our model are contractual terms for which students have preferences. The sequential structure of our rule is also related to the slot-specific framework of \citet{kominers2016matching}, in which slots with potentially different priorities are filled according to a precedence order. Unlike their fixed slot structure, non-scholarship admission in our model has no separate capacity and instead uses the total enrollment capacity remaining after scholarship contracts are chosen. This residual structure belongs to the slot-specific capacity-transfer framework of \citet{avataneo2021slot}. Rather than pursuing such generality, we focus on the particular choice rule induced by scholarship allocation and provide an axiomatic characterization that uniquely determines it.

    Our mechanism characterization builds on \citet{DOGAN2025106057}. They provide a general method for extending solitary axioms that characterize institutional choice rules into axioms that characterize the corresponding deferred-acceptance mechanism. Our three scholarship axioms are solitary in their sense. However, we cannot directly apply their method to characterize the mechanism since our scholarship choice rule is not path independent, and their characterization required the choice rule to be path independent. 

	Our paper connects to the literature on respecting improvements, which was introduced by \citet{balinski1999tale}. \citet{afacan2020graduate} showed that his mechanism for graduate admission with financial support respects student improvements, while \citet{avataneo2021slot} showed that the cumulative offer mechanism under slot-specific priorities with capacity transfers also respects students' improvements. \citet{hatfield2016improving} instead studied school improvements, showing that stable mechanisms respect school quality improvements only approximately in large markets. Our framework is closer to the latter question, but models school improvement through scholarship provision, rather than through an improvement in students' preferences over schools.
	
	The rest of the paper is organized as follows. Section~\ref{sec:model} introduces the model and definitions. Section~\ref{sec:choicerule} introduces and characterizes the scholarship choice rule, and provides the mechanism characterization. Section~\ref{sec:welfare} studies the welfare effects of scholarship provision, establishing the general negative result and the common-ranking characterization. Section~\ref{sec:conclude} concludes the paper. The proofs of the results are left to Appendix~\ref{sec:appen}.
	\section{Model}\label{sec:model}
    
	Let $\mathcal{I}$ be a nonempty finite set of students, and $\mathcal{S}$ be a nonempty finite set of schools. Let $M = \{0,m_1,m_2,\ldots ,m_K\}$ be the finite set of possible scholarship levels, where $0 < m_1 < m_2 < \ldots <m_K$ and $m = 0$ represents admission without a scholarship. A contract $x$ is a triple $(i,s,m) \in \mathcal{I} \times \mathcal{S} \times M$, and $\mathcal{X} \subseteq \mathcal{I} \times \mathcal{S} \times M$ denotes the set of available contracts. We assume that $(i,s,0) \in \mathcal{X}$ for every $i \in \mathcal{I}$ and $s \in \mathcal{S}$, that is, every student-school pair has a basic contract. For each $x \in \mathcal{X}$, let $\iota(x)$ be the student associated with contract $x$; similarly, let $\sigma(x)$ be the school associated with contract $x$, and let $\mathbf{m}(x)$ be the scholarship term associated with contract $x$. Furthermore, we denote $\mathcal{X}_i := \{x\in \mathcal{X}: i = \iota(x)\}$; that is, $\mathcal{X}_i$ is the set of contracts that student $i$ is involved in. Similarly, $\mathcal{X}_s := \{x\in \mathcal{X}: s = \sigma(x)\}$ is the set of contracts associated with school $s$, and $\mathcal{X}_m := \{x\in \mathcal{X}: m = \mathbf{m}(x)\}$ is the set of contracts associated with scholarship level $m$.
	
	Each student $i \in \mathcal{I}$ has a preference $\succ_i$ over $\mathcal{X}_i \cup \{\emptyset\}$, which is the set of contracts associated with $i$ and an outside option. We impose \textbf{within-school monotonicity}: for every student $i$, every school $s$, and every pair of available contracts $(i,s,m_\ell),(i,s,m_k) \in \mathcal X_i$ with $m_\ell > m_k$, $(i,s,m_\ell) \succ_i (i,s,m_k)$. Let $\succeq_i$ be the weak preference relation induced by $\succ_i$. Let $\mathcal R_i^{WM}$ be the set of preference relations over $\mathcal X_i\cup\{\emptyset\}$ whose strict parts satisfy within-school monotonicity, and define $\mathcal R^{WM}:=\prod_{i\in\mathcal I}\mathcal R_i^{WM}.$ Throughout the paper, the student-preference profile $\succeq_{\mathcal I}:=(\succeq_i)_{i\in\mathcal I}$ belongs to $\mathcal R^{WM}$. For any set of contracts $X\subseteq\mathcal X$, let $X_i:=\{x\in X:i=\iota(x)\}$ and define the choice rule induced by $\succ_i$ as $\ch_i(X)=\max_{\succ_i}\bigl(X_i\cup\{\emptyset\}\bigr).$ Thus, student $i$ chooses her most-preferred acceptable contract in $X_i$ whenever one exists and chooses $\emptyset$ otherwise. Because the student has unit capacity and her choice is induced by a fixed strict ranking over individual contracts and the outside option, $\ch_i$ is a responsive choice rule. On the other hand, each school $s$ has a choice rule $\ch_s$ such that, for every set of contracts $X\subseteq\mathcal X_s$, $\ch_s(X)\subseteq X$.
	
	For a school $s \in \mathcal{S}$, each student $i$ is endowed with a distinct merit score $\pi_s(i) \in \mathbb{R}_+$. Thus, $\pi_s$ induces a strict ranking of contracts at each scholarship level through the associated students. Since the merit score is school-specific, schools can have different priority rankings over students. We assume every student is acceptable to every school. Let $\iota(X_s) = \{\iota(x):x \in X_s\}$; that is, the set of students which school $s$ has contracts with under a set of contracts $X$. Each school $s$ has a capacity $q_s\in\mathbb N$ that regulates the total number of contracts school $s$ can choose. Let $M^s \subseteq \{m_1,\ldots,m_K\}$ be the set of scholarship levels offered by school $s$. We assume that for all $x \in \mathcal X_s$, $\mathbf{m}(x) \in M^s \cup \{0\}$; that is, contracts involving school $s$ exist only at scholarship levels offered by $s$ or the no-scholarship level. Furthermore, school $s$ has an integer capacity $q^{m_k}_s\in\mathbb N_0$ for each level $m_k \in M^s$. We assume that $\sum_{m_k \in M^s} q^{m_k}_s \leq q_s$; that is, the total scholarship capacity at $s$ cannot exceed its total capacity. We do not impose a separate capacity at the zero-scholarship level; zero-scholarship contracts use the total capacity remaining after positive-scholarship contracts are chosen.
	
	Let $\ch_\mathcal{I} := (\ch_i)_{i \in \mathcal{I}}$ and $\ch_\mathcal{S} := (\ch_s)_{s \in \mathcal{S}}$ denote the choice rule profiles of the students and of the schools, respectively. A market is a tuple $\langle \mathcal{I}, \mathcal{S}, \mathcal{X}, \ch_\mathcal{I}, \ch_\mathcal{S} \rangle$. A matching is a set of contracts $\mu\subseteq\mathcal{X}$. For each school $s$, let $\mu_s:=\mu\cap\mathcal X_s$. We consider many-to-one matching, where a student can get at most one contract under a matching, while a school may have a set of contracts. We say a matching $\mu$ is \emph{feasible} if $|\mu\cap\mathcal X_i| \le 1$, $|\mu_s| \leq q_s$, and $|\{x \in \mu_s : \mathbf{m}(x) = m_k\}| \leq q^{m_k}_s$, for each student $i$, each school $s$, and each scholarship level $m_k \neq 0$. For any feasible matching, let $\mu_i$ denote the unique contract in $\mu\cap\mathcal X_i$ when this set is nonempty, and let $\mu_i=\emptyset$ otherwise. We now define stability:
	
	\begin{defn}\label{def:stable}
		A matching $\mu$ is \textbf{stable} if the following two conditions hold:
		\begin{itemize}
			\item[1.](Individual rationality) For every student $i$, $\ch_i(\mu) = \mu_i$, and for all school $s$, $\ch_s(\mu)=\mu_s$,
			\item[2.](No blocking contracts) There does not exist a school $s$ and a nonempty set of contracts $Z \subseteq \mathcal{X}_s \setminus \mu$ such that $Z \subseteq \ch_s(\mu \cup Z)$, and for every student $i \in \iota(Z)$, $Z_i = \{\ch_i(\mu \cup Z)\}$.
		\end{itemize}
	\end{defn}

	Let $\mathcal M$ denote the set of feasible matchings. A matching mechanism on $\mathcal R^{WM}$ is a function $\varphi:\mathcal R^{WM}\to\mathcal M.$ For every preference profile $\succeq_{\mathcal I}$, let $\varphi_i(\succeq_{\mathcal I})$ denote the contract assigned to student $i$, with $\varphi_i(\succeq_{\mathcal I})=\emptyset$ if $i$ is unmatched. 
	
	We introduce the main mechanism of interest. The student-proposing deferred-acceptance (DA) mechanism based on $\ch_\mathcal{S}$ is defined through the following algorithm \citep{gale1962college}.

	\noindent\textbf{Student-Proposing Deferred-Acceptance Algorithm Based on $\ch_\mathcal{S}$}
	
	\begin{enumerate}
		\item[Step 1:] Each student proposes her most preferred acceptable contract to the associated school (if there is none, she is assigned the null contract $\emptyset$). Each school $s$ considers the proposals it receives at this step, say $X^1_s$, tentatively accepts $\ch_s(X^1_s)$, and permanently rejects $X^1_s \setminus \ch_s(X^1_s)$. If there is no rejection by any school at this step, then stop and return $\bigcup_{s \in \mathcal{S}} \ch_s(X^1_s)$. Otherwise, go to Step 2.
		\item[Step $t \geq 2$:] Each student whose contract was rejected in the previous step proposes her next most preferred acceptable contract to the associated school (if there is none, she is assigned the null contract). Each school $s$ considers its tentatively accepted contracts from the previous step together with the proposals it receives at this step, say $X^t_s$, tentatively accepts $\ch_s(X^t_s)$, and permanently rejects $X^t_s \setminus \ch_s(X^t_s)$. If there is no rejection by any school at this step, then stop and return $\bigcup_{s \in \mathcal{S}} \ch_s(X^t_s)$. Otherwise, go to Step $t+1$.
	\end{enumerate}

	The algorithm stops at a finite step because no student proposes any contract more than once and the set of contracts $\mathcal{X}$ is finite. The DA mechanism based on $\ch_\mathcal{S}$ chooses, at each preference profile, the matching defined by the acceptances at the last step of the DA algorithm based on $\ch_\mathcal{S}$.

	A matching mechanism $\varphi$ satisfies a matching axiom if and only if, for each preference profile $\succeq_{\mathcal I} \in \mathcal R^{WM}$, $\varphi(\succeq_{\mathcal I})$ satisfies the axiom. The following is a well-known axiom on matching mechanisms rather than on matchings, and it requires that no student can benefit from misreporting her preferences.

	\begin{defn}[Strategy-proofness]\label{def:strategy-proof}
		A matching mechanism $\varphi$ is \emph{strategy-proof on $\mathcal R^{WM}$} if, for every preference profile $\succeq_{\mathcal I}=(\succeq_i,\succeq_{-i})\in\mathcal R^{WM}$, every student $i\in\mathcal I$, and every admissible alternative report $\widehat{\succeq}_i\in\mathcal R_i^{WM}$,
		\[
		\varphi_i(\succeq_i,\succeq_{-i})
		\succeq_i
		\varphi_i(\widehat{\succeq}_i,\succeq_{-i}).
		\]
	\end{defn}

	In words, truthful reporting weakly dominates every unilateral misreport that satisfies within-school monotonicity, holding the reports of all other students fixed.
	
	When a school is comparing two sets of contracts, or two matchings, we assume the school cares only about the quality of the students, not the cost of the scholarships it awards. In other words, scholarships are tools to attract students; once the scholarship contracts offered by a school are given, the school considers the quality of the set of students. To compare the quality of two sets of students, we use the following definition:
	\begin{defn}[Gale-merit domination]\label{def:Galedominate}
		Given $A, B \subseteq \iota(X_s)$, let $A = \{a_1,a_2,\ldots,a_{|A|}\}$ and $B = \{b_1,b_2,\ldots,b_{|B|}\}$ be enumerated such that:\footnote{Notice that $A$ and $B$ are two sets of students that have contracts associated with school $s$. $A$ and $B$ are not two sets of contracts.}
		\begin{itemize}
			\item[(1.)] for each $j,k \in \{1,2,\ldots,|A|\}$, $j\leq k \Rightarrow \pi_s(a_j) \geq \pi_s(a_k)$,
			\item[(2.)] for each $j,k \in \{1,2,\ldots,|B|\}$, $j\leq k \Rightarrow \pi_s(b_j) \geq \pi_s(b_k)$. 
		\end{itemize}
		$A$ \textbf{Gale-dominates} $B$, or $A \trianglerighteq^G B$, if $|A| \geq |B|$ and, for each $j \in \{1,\ldots,|B|\}$, $\pi_s(a_j) \geq \pi_s(b_j)$.
	\end{defn} 
	Definition \ref{def:Galedominate} ranks two sets of students $A$ and $B$ that school $s$ has contracts with by their merit scores from high merit to low merit. It then compares the size of the sets and the merit score at each rank until the last student in set $B$ is reached. Next, we say $A$ \textbf{strictly Gale-dominates} $B$, or $A \triangleright^G B$,  if $A \trianglerighteq^G B$ and $B \ntrianglerighteq^G A$.
	
	\section{The Scholarship Choice Rule}\label{sec:choicerule}
    We first introduce the key property for any choice rule in matching with contracts, namely \emph{path independence}. \citet{aizerman1981general} showed that path independence is equivalent to the conjunction of two properties: \emph{substitutability}, which says that adding another contract should not cause a previously rejected contract to become chosen, and \emph{consistency}, which says that deleting contracts that are not chosen will not affect the chosen set.\footnote{Formally, a choice rule $\ch$ is \emph{substitutable} if, for all $x, x' \in \mathcal{X}$ and $Y \subseteq \mathcal{X}$, $x \notin \ch(Y \cup \{x\}) \Rightarrow x \notin \ch(Y \cup \{x,x'\})$; and $\ch$ is \emph{consistent} if, for all $X, Y \subseteq \mathcal{X}$, $\ch(X) \subseteq Y \subseteq X \Rightarrow \ch(Y) = \ch(X)$. We invoke these two properties individually in the proofs in Appendix~\ref{sec:appen}.} We define path independence as follows:

	\begin{defn}\label{def:pathind}
			A choice rule $\ch$ is \textbf{path independent} if, for each $X,Y \subseteq \mathcal{X}$, $$\ch(X\cup Y) = \ch(X\cup \ch(Y)).$$ 
	\end{defn}

	In words, a choice rule satisfies path independence if the choice of the set of contracts remains unchanged when we first choose from an arbitrary subset of contracts, combine the result with other contracts, and choose again. \citet{chambers2017choice} showed that the student-proposing deferred-acceptance mechanism produces a stable matching when schools and students all have path-independent choice rules. Since the students' choice rule $\ch_i$ is a \emph{responsive choice rule}, it is known to be path independent. Our goal is to find a choice rule for schools that is natural in this scholarship context and path independent.
    
	Before turning to the details of the choice rule, let us define another desired property, \emph{size monotonicity}. 
	
	\begin{defn}\label{def:sizemon}
		A school choice rule $\ch_s$ is \textbf{size monotonic} if, for all
		$X\subseteq Y\subseteq\mathcal X_s$,
		\[
		|\ch_s(X)|\leq |\ch_s(Y)|.
		\]
	\end{defn}

	In words, if the size of the set of contracts that school $s$ can choose weakly increases, then the size of the chosen contracts cannot decrease. This is also known as the law of aggregate demand \citep{hatfield2005matching}. 
	
    To come up with a simple choice rule for schools, we consider three properties we regard as desirable for a school choice rule. The first axiom requires the school's choice to respect the relevant feasibility constraints.

    For any set of contracts \(Y\subseteq \mathcal X_s\), let \(Y^m:=\{y\in Y:\mathbf m(y)=m\}\) and \(Y_i:=\{y\in Y:\iota(y)=i\}.\) Define the family of feasible subsets of a set of contracts \(X\subseteq\mathcal X_s\) by
    \[
    \mathcal F_s(X) :=
    \left\{
    Y\subseteq X:
    \begin{array}{l}
    |Y_i|\leq 1 \text{ for every student }i,\\
    |Y|\leq q_s,\\
    |Y^{m_k}|\leq q_s^{m_k}
       \text{ for every }m_k\in M^s
    \end{array}
    \right\}.
    \]

    Because the capacity of the non-scholarship contract is determined
    residually, for every feasible set \(Y\) define the effective capacity
    of scholarship level \(m\) by
    \[
    \bar q_s^m(Y)
    :=
    \begin{cases}
    q_s^m, & \text{if }m\in M^s,\\[1mm]
    q_s-\displaystyle\sum_{\ell\in M^s}|Y^\ell|,
        & \text{if }m=0.
    \end{cases}
    \]
    Thus, \(\bar q_s^0(Y)\) is the number of seats remaining after the
    scholarship contracts in \(Y\) have been counted.
    
    \begin{axiom}[Scholarship Feasibility]\label{axiom:sch-feasibility}
    For every \(X\subseteq\mathcal X_s\), \(\ch_s(X) \in \mathcal F_s(X).\)
    \end{axiom}
    
    \emph{Scholarship feasibility} requires that the school choose at most one contract for each student, respect its total capacity, and respect the
    capacity associated with every positive scholarship level.
    
    \begin{axiom}[Scholarship Maximality]\label{axiom:sch-maximal}
    For every \(X\subseteq\mathcal X_s\), suppose $x\in X\setminus \ch_s(X)$ and suppose that $\left| \{y\in \ch_s(X):\mathbf m(y)=\mathbf m(x)\} \right| < \bar q_s^{\mathbf m(x)}(\ch_s(X))$. Then, there exists \(z\in \ch_s(X)\) such that
    \[
    \iota(z)=\iota(x) \quad \text{and} \quad \mathbf m(z)>\mathbf m(x).
    \]
    \end{axiom}
    
    \emph{Scholarship maximality} says that if a contract is rejected even though there is unused capacity at its scholarship level, then the associated student must have been selected under a contract with a higher scholarship level. In particular, when \(\mathbf m(x)=0\), unused capacity means that the school has not filled its total capacity.
    
    \begin{axiom}[No Justified Envy]\label{axiom:nje}
    For every \(X\subseteq\mathcal X_s\), suppose that \(x\in X\setminus \ch_s(X)\), no \(z\in \ch_s(X)\) satisfies $\iota(z)=\iota(x)$ with $\mathbf m(z)>\mathbf m(x),$ and $\left|\{y\in \ch_s(X):\mathbf m(y)=\mathbf m(x)\}\right| = \bar q_s^{\mathbf m(x)}(\ch_s(X)).$
	Then, for every \(y\in\ch_s(X)\) with \(\mathbf m(y)=\mathbf m(x)\),
    \[
    \pi_s(\iota(y))>\pi_s(\iota(x)).
    \]
    \end{axiom}
    
    \emph{No justified envy} says that if a student is not selected at a higher scholarship level and her contract is rejected when its scholarship level is full, then every student selected at that level has a higher merit score than the rejected student.

    Each of the three axioms is solitary in the sense of \citet{DOGAN2025106057}. For a fixed set of contracts \(X\), whether a proposed choice \(Y\subseteq X\) satisfies any of the three axioms can be determined using only \(X\) and \(Y\); choices from other choice problems are irrelevant. Inspired by these axioms, we introduce the scholarship choice rule, denoted by \(\ch_s^{Sc}\).
    
    \begin{algorithm}[H]
    \caption{The Scholarship Choice Rule $\ch^{Sc}_s$}
	    \KwIn{Set of contracts $X_s \subseteq \mathcal{X}_s$}
    \KwOut{Chosen contracts $\ch^{Sc}_s(X_s)$}

    \textbf{Initialization:}\\
    Order scholarship levels: for all $m_k \in M^s \subseteq \{m_1, m_2, \ldots, m_K\}$, order the scholarship levels from highest to lowest, denoted as $M_{\text{ordered}}$ \;
    Order students by merit: order $\iota(X_s)$ according to $\pi_s$ from highest to lowest, denoted as $I_{\text{ordered}}$\;
    $C_f \leftarrow \emptyset$, $X_{\text{avail}} \leftarrow X_s$\;

    \BlankLine
    \textbf{Fill scholarship seats:}\\
    \ForEach{$m \in M_{\text{ordered}}$}{
	   $X_m \leftarrow \{x \in X_{\text{avail}}: \mathbf{m}(x) = m\}$\; 
	   $I^*_m \leftarrow$ top $\min(|\iota(X_m)|, q^m_s)$ students from $\iota(X_m)$\;
	   $C_m \leftarrow \{x \in X_m: \iota(x) \in I^*_m\}$\;
	   $C_f \leftarrow C_f \cup C_m$\;
	   $X_{\text{avail}} \leftarrow X_{\text{avail}} \backslash \{x: \iota(x) \in I^*_m\}$\;
    }

    \BlankLine
    \textbf{Fill remaining seats:}\\
    $q_{\text{remain}} \leftarrow q_s - |C_f|$\;
    $X_0 \leftarrow \{x \in X_{\text{avail}}: \mathbf{m}(x) = 0\}$\;
    $I^*_0 \leftarrow$ top $\min(|\iota(X_0)|, q_{\text{remain}})$ students from $\iota(X_0)$\;
    $C_0 \leftarrow \{x \in X_0: \iota(x) \in I^*_0\}$\;
    $C_f \leftarrow C_f \cup C_0$\;

    \Return{$C_f$}
    \end{algorithm}

	In other words, the scholarship choice rule $\ch^{Sc}_s$ searches the available contracts in a greedy way. First, it starts with the highest scholarship level and chooses the contract associated with the highest-merit remaining student up to the level capacity. Then, it proceeds to the next highest scholarship level in the same way. Lastly, it goes to the non-scholarship contracts. 
	
	The first result shows that $\ch_s^{Sc}$ is the unique choice rule that satisfies Axiom~\ref{axiom:sch-feasibility}, Axiom~\ref{axiom:sch-maximal}, and Axiom~\ref{axiom:nje}.
    
    \begin{prop}\label{prop:characterization}
    A choice rule is scholarship feasible, scholarship maximal, and free of justified envy if and only if it is the scholarship choice rule \(\ch_s^{Sc}\).
    \end{prop}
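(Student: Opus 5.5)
The proof has two directions. For the \emph{if} direction, we verify directly that $\ch_s^{Sc}$ satisfies Axioms~\ref{axiom:sch-feasibility}--\ref{axiom:nje}. For the \emph{only if} direction, we take an arbitrary choice rule $\ch_s$ satisfying the three axioms and show $\ch_s(X)=\ch_s^{Sc}(X)$ for each fixed $X\subseteq\mathcal X_s$. The argument is a downward induction over the scholarship levels of $M^s$, from the highest level down to $0$.

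\textbf{Step 1: $\ch_s^{Sc}$ satisfies the axioms.} Feasibility follows from the algorithm.
\begin{itemize}
\item At each positive level $m$, at most $q_s^m$ students are taken.
\item Once a student is selected, all her remaining contracts are removed from $X_{\text{avail}}$, so she holds at most one contract.
\item The final stage takes at most $q_s-|C_f|$ contracts, so $|\ch^{Sc}_s(X)|\le q_s$.
\end{itemize}
For maximality, take a rejected $x$ at level $m$. If $m>0$ and the level is not full, then the top $\min(|\iota(X_m)|,q_s^m)$ rule took every student still available at level $m$. Hence $\iota(x)$ was removed earlier, which means she was selected at a higher level. The case $m=0$ is identical, with $q_{\text{remain}}=\bar q_s^0(\ch^{Sc}_s(X))$. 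For no justified envy, if $\iota(x)$ was not selected at a higher level, then $x$ was still available at its level. If that level is full, the selected students there are the top $\bar q_s^m$ available students, so each has strictly higher merit than $\iota(x)$, since merit scores are distinct.

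\textbf{Step 2: uniqueness by induction.} Fix $X$ and write $C=\ch_s(X)$, $C^{Sc}=\ch_s^{Sc}(X)$. The inductive hypothesis is that $C$ and $C^{Sc}$ contain the same contracts at every level strictly above $m$. Let $A_m$ be the set of students who have a level-$m$ contract in $X$ and no contract in $C$ at a level above $m$. By the hypothesis, $A_m$ is exactly the available set $\iota(X_m)$ that the algorithm sees at level $m$.

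By feasibility, every student holding a level-$m$ contract in $C$ belongs to $A_m$, since she has only one chosen contract. The effective capacity $\bar q_s^m$ agrees for $C$ and $C^{Sc}$: this is trivial for $m>0$, and for $m=0$ it follows because the positive-level parts already coincide.

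\emph{Case 1: level $m$ is not full in $C$.} Any rejected level-$m$ contract of a student in $A_m$ would violate maximality. So all of $A_m$ is chosen at level $m$, and then $|A_m|<\bar q^m_s$.

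\emph{Case 2: level $m$ is full in $C$.} Every rejected student in $A_m$ meets the hypotheses of no justified envy. So every selected level-$m$ student beats every rejected one in merit.

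In either case, the level-$m$ part of $C$ consists of the contracts of the top $\min(|A_m|,\bar q_s^m)$ students of $A_m$. This is exactly $C_m$ in the algorithm, which completes the induction step. After level $0$, we conclude $C=C^{Sc}$.

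\textbf{Main obstacle.} The delicate point is the bookkeeping at level $0$, where capacity is residual. The argument depends on the positive-level choices already agreeing, so that $\bar q_s^0(C)=q_{\text{remain}}$. It also depends on correctly identifying the available set $A_m$: students chosen at higher levels are excluded by feasibility, and students chosen at lower levels cannot also be chosen at level $m$, again by feasibility. This is why the induction must run from the highest level downward rather than treat levels independently.
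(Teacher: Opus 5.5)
Your proposal is correct and takes essentially the same route as the paper: you check directly that $\ch_s^{Sc}$ satisfies the three axioms, then induct downward over scholarship levels, where maximality forces every available student to be chosen when a level is unfilled and no justified envy forces the top-merit selection when it is full. Your case split on whether level $m$ is full repackages the paper's step establishing $|Y^{m_k}|=\min\{|A_k|,q_s^{m_k}\}$, and your treatment of the residual capacity at level zero matches the paper's.
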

       
	Although $\ch^{Sc}_s$ is intuitive, it fails path independence. We therefore construct a completion satisfying path independence. As \citet{hatfield2015hidden} point out, if a profile of choice rules $(\ch_s)_{s\in\mathcal S}$ has such a completion, then every stable outcome with respect to the completion is also stable with respect to the original profile (see their Lemma 1). \citet{yenmez2018college} generalize this approach to many-to-many matching and show that deferred acceptance produces a stable matching when student choice rules are path-independent and school choice rules have path-independent modifications (see their Theorem 1). We first define a completion.

    \begin{defn}[Choice rule completion]\label{def:completion}
			A school choice rule $\overline{\ch}_s$ is a \textbf{completion} of a choice rule $\ch_s$ if, for every set of contracts $X \subseteq \mathcal{X}_s$, either:
		\begin{itemize}
			\item[1.] There exist distinct contracts $x, y \in \overline{\ch}_s (X)$ such that $\iota(x) = \iota(y)$, or
			\item[2.] $\overline{\ch}_s(X) = \ch_s(X)$. 
		\end{itemize}
	\end{defn}
    
	Proposition~\ref{prop:PathI} gives both the failure of path independence and the desired completion.

	\begin{prop}\label{prop:PathI}
		$\ch^{Sc}_s$ is not path independent, but $\ch^{Sc}_s$ has a path-independent and size-monotonic completion, denoted by $\overline{\ch}^{Sc}_s$.
	\end{prop}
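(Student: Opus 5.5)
The proof has two parts: a counterexample showing that $\ch^{Sc}_s$ is not path independent, and a construction of the completion as a slot-specific rule in the spirit of \citet{kominers2016matching} and \citet{avataneo2021slot}.

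\emph{Failure of path independence.} The failure comes from a student who holds contracts at two scholarship levels. Take one school $s$ with $q_s=2$, a single scholarship level $m_1$ with $q^{m_1}_s=1$, and two students $i,j$ with $\pi_s(i)>\pi_s(j)$. Let
\[
X=\{(i,s,m_1),(i,s,0)\}, \qquad Y=\{(j,s,m_1)\}.
\]
Then $\ch^{Sc}_s(X\cup Y)=\{(i,s,m_1)\}$, because $j$ loses the single $m_1$ seat to $i$ and $j$ has no $0$-contract. On the other hand, $\ch^{Sc}_s(Y)=Y$, and choosing again from $X\cup Y$ returns the same set, so this pair does not work. To get an actual violation, I will instead split the contracts as $X'=\{(i,s,m_1)\}$ and $Y'=\{(i,s,0),(j,s,m_1)\}$, with $q_s=1$ so that capacity binds. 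Then $\ch^{Sc}_s(Y')=\{(j,s,m_1)\}$, since $(i,s,0)$ is crowded out of the residual capacity. Hence
\[
\ch^{Sc}_s(X'\cup\ch^{Sc}_s(Y'))=\ch^{Sc}_s(\{(i,s,m_1),(j,s,m_1)\})=\{(i,s,m_1)\}=\ch^{Sc}_s(X'\cup Y').
\]
This also agrees, so I will search over small examples with two levels $m_2>m_1$. The mechanism I expect to work: when $i$ is chosen at $m_2$, her lower contract is deleted, which frees an $m_1$ seat for $j$. Whether $i$'s lower contract is present in the first-stage subset then changes which students survive, and that produces the inconsistency. I will check candidate examples directly against the algorithm.

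\emph{The completion.} I will define $\overline{\ch}^{Sc}_s$ as the same sequential procedure with the removal step $X_{\text{avail}}\leftarrow X_{\text{avail}}\setminus\{x:\iota(x)\in I^*_m\}$ deleted. Each level $m\in M^s$ then picks the top $\min(|\iota(X^m)|,q^m_s)$ contracts at level $m$ from all of $X$. The $0$ level picks the top contracts in $X^0$ up to the residual $q_s-\sum_{m\in M^s}|C_m|$.

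This rule is a completion. If no student has two chosen contracts, then the deletion step in $\ch^{Sc}_s$ never removed a contract that would otherwise have been chosen, so the two procedures choose identically level by level, and $\overline{\ch}^{Sc}_s(X)=\ch^{Sc}_s(X)$.

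For path independence I will verify substitutability and consistency, which is enough by \citet{aizerman1981general}. Consistency is straightforward. Removing unchosen contracts leaves each level's top set unchanged, so each $C_m$ is unchanged, the residual at level $0$ is unchanged, and therefore $C_0$ is unchanged. For substitutability, I will handle a positive-level contract and a $0$-level contract separately. A positive-level contract rejected at its level stays rejected when more contracts are added, because each level is a responsive choice with fixed capacity applied independently of the others. For a rejected $0$-level contract $x$, adding a contract $x'$ either increases competition at level $0$ or fills one more scholarship seat. The residual capacity $q_s-\sum_m|C_m|$ is weakly decreasing as contracts are added, since each $|C_m|=\min(|\iota(X^m)|,q^m_s)$ is weakly increasing. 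So $x$ remains rejected.

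Size monotonicity follows from the total size formula
\[
\sum_{m\in M^s}|C_m|+\min\Bigl(|X^0|,\;q_s-\sum_{m\in M^s}|C_m|\Bigr)=\min\Bigl(q_s,\;\sum_{m\in M^s}|C_m|+|X^0|\Bigr),
\]
which is weakly increasing in $X$ because each $|C_m|$ and $|X^0|$ is. I expect the main obstacle to be the path-independence counterexample, since several natural small examples turn out to be consistent, as seen above. I would try two levels with capacities $q^{m_2}_s=q^{m_1}_s=1$ and $q_s=2$, and a top student holding both scholarship contracts.
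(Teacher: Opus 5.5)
\textbf{Gap: no counterexample.} The first half of the proposition is left unproved, because you never exhibit a violation of path independence. As you note yourself, neither of the two splits you compute produces a violation, and the proposal ends with a plan to keep searching. So the assertion that $\ch^{Sc}_s$ is not path independent has no proof.

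The configuration you suggest at the end does work, and with only two students. Let $q_s=2$, $q_s^{m_2}=q_s^{m_1}=1$, $\pi_s(i)>\pi_s(j)$, $X=\{(i,s,m_2)\}$ and $Y=\{(i,s,m_1),(j,s,m_1)\}$. Then $\ch^{Sc}_s(Y)=\{(i,s,m_1)\}$, so $\ch^{Sc}_s(X\cup\ch^{Sc}_s(Y))=\ch^{Sc}_s(\{(i,s,m_2),(i,s,m_1)\})=\{(i,s,m_2)\}$. By contrast, $\ch^{Sc}_s(X\cup Y)=\{(i,s,m_2),(j,s,m_1)\}$, because selecting $i$ at $m_2$ deletes $(i,s,m_1)$ and frees the $m_1$ seat for $j$. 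This is exactly the complementarity you described, appearing as a direct failure of substitutability, and it is smaller than the paper's five-student example. Some verified example of this kind has to be in the proof.

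\textbf{Completion: correct.} This half matches the paper. You use the same level-by-level rule and the same argument that it coincides with $\ch^{Sc}_s$ whenever no student receives two contracts. That argument implicitly uses a fact the paper states explicitly: deleting contracts outside a top-$k$ set leaves that set unchanged, applied inductively down the levels. You also use the same size formula, $\min\{q_s,|X^0|+\sum_{m\in M^s}\min\{|X^m|,q_s^m\}\}$.

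The only real difference is how you reach path independence. You verify substitutability and consistency separately and invoke the Aizerman--Malishevski equivalence. The paper instead checks $\overline{\ch}^{Sc}_s(X\cup Y)=\overline{\ch}^{Sc}_s(X\cup\overline{\ch}^{Sc}_s(Y))$ directly, using the top-$q$ identity at each positive level and the bound $r(X\cup Y)\le r(Y)$ at level zero. Your decomposition is somewhat more modular: each property reduces to a one-line monotonicity fact about top-$k$ selections. In particular, the weakly decreasing residual $q_s-\sum_m|C_m|$ is exactly what substitutability at level zero needs. The paper's direct check is self-contained and does not rely on the equivalence.
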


	The failure of path independence comes from the interaction between scholarship levels: choosing a student at a higher level removes that student's contracts from lower-level competitions. The completion removes this interaction by conducting each level's merit competition independently.

	\subsection*{Characterizing Matching Mechanisms}
	We follow \citet{DOGAN2025106057} to characterize the DA mechanism based on the scholarship choice rule. For each $k\in\{F,M,N\}$, define the correspondence $\psi_s^k:2^{\mathcal X_s}\rightrightarrows 2^{\mathcal X_s},$ where $F$ refers to scholarship feasibility (Axiom~\ref{axiom:sch-feasibility}), $M$ refers to scholarship maximality (Axiom~\ref{axiom:sch-maximal}), and $N$ refers to no justified envy (Axiom~\ref{axiom:nje}). For every $X\subseteq\mathcal X_s$, let
	\[
	\psi_s^k(X)
	:=
	\left\{
	Y\subseteq X:Y\text{ satisfies axiom }k\text{ at }X
	\right\}.
	\]
	Thus, $\psi_s^k(X)\subseteq 2^X$ is the collection of permissible subsets of $X$ under axiom $k$. Proposition~\ref{prop:characterization} can then be rewritten as
	\[
	\psi^F_s(X)\cap\psi^M_s(X)\cap\psi^N_s(X)
	=
	\{\ch^{Sc}_s(X)\}
	\]
	for every $X\subseteq\mathcal X_s$ and every $s\in\mathcal S$.

	Next, we define the demand for school $s\in\mathcal S$. For any student-preference profile $\succeq_{\mathcal I}$ and any feasible matching $\mu$, let
	\[
	D_s(\mu,\succeq_{\mathcal I})
	:=
	\left\{
	x\in\mathcal X_s:x\succeq_{\iota(x)}\mu_{\iota(x)}
	\right\}.
	\]
	For each $k\in\{F,M,N\}$, a matching $\mu$ satisfies the \textbf{extension of axiom $k$} at school $s$ under $\succeq_{\mathcal I}$ if
	\[
	\mu_s\in\psi_s^k\bigl(D_s(\mu,\succeq_{\mathcal I})\bigr).
	\]
	A matching mechanism $\varphi$ satisfies the extension of axiom $k$ if, for every student-preference profile $\succeq_{\mathcal I}$ and every school $s$, the matching $\mu=\varphi(\succeq_{\mathcal I})$ satisfies this condition.

	We are now ready to provide our characterization result:
	\begin{prop}\label{prop:characterizeMech}
		Given the student-preference domain $\mathcal{R}^{WM}$, a matching mechanism $\varphi$ satisfies
		\begin{enumerate}
			\item the extension of scholarship feasibility,
			\item the extension of scholarship maximality,
			\item the extension of no justified envy,
			\item individual rationality, and
			\item strategy-proofness,
		\end{enumerate}
		if and only if it is outcome equivalent to the student-proposing DA mechanism based on $(\ch^{Sc}_s)_{s \in \mathcal{S}}$.
	\end{prop}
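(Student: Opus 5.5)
The plan is to prove the two directions separately. Both directions rest on Proposition~\ref{prop:PathI}, which supplies the path-independent and size-monotonic completion $\overline{\ch}^{Sc}_s$, and on the observation that on $\mathcal R^{WM}$ the DA outcome based on $(\ch^{Sc}_s)$ coincides with the DA outcome based on $(\overline{\ch}^{Sc}_s)$.

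\textbf{The DA outcome is unchanged by the completion.} The completion differs from $\ch^{Sc}_s$ only on sets where it selects two contracts of the same student. Under within-school monotonicity, a student proposes to a given school in decreasing order of scholarship levels. So at any step, the set $X^t_s$ held at $s$ contains, for each student, a contract $(i,s,m)$ only after all her higher-level proposals to $s$ have been rejected. I will argue by induction on steps that the completion never chooses two contracts of one student from such a set. If $(i,s,m')$ with $m'>m$ was rejected earlier, then substitutability of the completion keeps it rejected, so $\overline{\ch}^{Sc}_s(X^t_s)=\ch^{Sc}_s(X^t_s)$ along the DA path. Hence the two algorithms coincide.

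\textbf{Sufficiency.} The key step is to show $\mu_s=\ch^{Sc}_s(D_s(\mu,\succeq_{\mathcal I}))$ for $\mu$ the DA outcome. Given that, Proposition~\ref{prop:characterization} delivers the three extension axioms.
\begin{itemize}
\item Individual rationality is immediate from the algorithm.
\item Strategy-proofness follows from the Hatfield--Milgrom and Hatfield--Kominers theorem: the completion is path independent and satisfies the law of aggregate demand, so student-proposing DA (the cumulative offer process) with the completion is strategy-proof. By the outcome equivalence above this transfers to $\ch^{Sc}_s$. One only needs to note that misreports are restricted to $\mathcal R^{WM}$, which shrinks the set of deviations.
\item For the identity, $D_s(\mu)$ consists of $\mu_s$ together with contracts students weakly prefer to their assignment. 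Each such contract was proposed and rejected at some step. By substitutability and path independence of the completion, it remains rejected from the cumulative set of offers, and $\overline{\ch}^{Sc}_s$ applied to $D_s(\mu)$ yields $\mu_s$.
\item It remains to verify that $D_s(\mu)$ cannot induce a double selection. Within-school monotonicity gives the needed structure: if $(i,s,m)\in D_s$ then every $(i,s,m')$ with $m'>m$ is also in $D_s$ and was rejected.
\end{itemize}

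\textbf{Necessity.} Here I would adapt the argument of \citet{DOGAN2025106057}.
\begin{itemize}
\item First show that any $\varphi$ satisfying the three extensions and individual rationality gives a stable matching with respect to $(\ch^{Sc}_s)$. By Proposition~\ref{prop:characterization}, $\mu_s=\ch^{Sc}_s(D_s(\mu))$. A blocking set $Z$ lies in $D_s(\mu)$, so $Z\subseteq \ch^{Sc}_s(\mu\cup Z)$ contradicts this once we verify that $\ch^{Sc}_s(D_s(\mu))=\mu_s$ implies $\ch^{Sc}_s(\mu\cup Z)=\mu_s$. That verification is a consistency-type property of $\ch^{Sc}_s$ on $D$-shaped sets, and I would prove it directly from the greedy structure.
\item Then invoke the Hatfield--Kominers--Westkamp (\citet{Hatfield2021Stability}) uniqueness result: under observable substitutability and the related conditions, which the completion satisfies, DA is the unique stable and strategy-proof mechanism. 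This gives outcome equivalence.
\end{itemize}

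\textbf{Main obstacle.} The hardest part is that uniqueness result. It is stated for the full preference domain, while we work on the restricted domain $\mathcal R^{WM}$ and with a non-path-independent $\ch^{Sc}_s$. I would first try to re-run their proof on $\mathcal R^{WM}$, checking that each preference modification they use can be chosen monotone within schools. The natural candidates are truncations, which preserve within-school monotonicity if one truncates below a whole block of a school's contracts or keeps the top levels.

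If that fails, I would fall back on a direct induction in the style of Alcalde--Barber\`a, run on the number of contracts ranked acceptable. A student receiving something worse than her DA outcome truncates her list just below the DA contract. Stability, via the extensions, then forces the truncated outcome to be her DA contract, which contradicts strategy-proofness.
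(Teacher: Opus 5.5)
\textbf{The gap is the uniqueness step of necessity.} Your sufficiency argument and the stability half of necessity are sound and track the paper. For the stability half, the consistency you want holds for $\ch^{Sc}_s$ on every menu, not only on demand sets, because the greedy rule satisfies irrelevance of rejected contracts (Lemma~\ref{lem:sg-irc}). The uniqueness step, however, you identify as the main obstacle and then leave as a plan. The paper closes it without re-running any proof:
\begin{itemize}
\item Appendix B.4 of \citet{Hatfield2021Stability} extends their Theorem~1b to admissible subclasses of preferences, meaning restrictions imposed school by school and closed under truncation.
\item Lemma~\ref{lem:wm-admissible-subclass} checks that $\mathcal R^{WM}$ is such a subclass. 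Any truncation only moves $\emptyset$, so it always preserves within-school monotonicity; your caveat about truncating ``below a whole block'' is unnecessary.
\item The hypotheses of that theorem must hold for $\ch^{Sc}_s$, the rule with respect to which you proved stability, not for the completion. Observable substitutability is inherited from the existence of the completion, and irrelevance of rejected contracts is Lemma~\ref{lem:sg-irc}.
\item The theorem identifies $\varphi$ with the cumulative-offer mechanism, so you still need cumulative offer $=$ DA. This is Proposition A.1 there, or your first step run on cumulative offer sets.
\end{itemize}

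\textbf{Your fallback fails as written.} It breaks at the sentence ``stability, via the extensions, then forces the truncated outcome to be her DA contract.'' An Alcalde--Barber\`a argument needs two facts. First, the DA outcome $\mu^*$ must be student-optimal among the matchings $\varphi$ can produce; this is what locates a student $i$ with $\mu^*_i\succ_i\varphi_i(\succeq_{\mathcal I})$. Second, a rural-hospitals property must hold; this is what forces $i$ to be matched after she truncates just below $\mu^*_i$. Neither is available for $\ch^{Sc}_s$, which is not substitutable, and stability with respect to $\ch^{Sc}_s$ alone does not supply them.

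\textbf{The missing idea is to move stability to the completion.} If $\mu_s=\ch^{Sc}_s(D_s(\mu,\succeq_{\mathcal I}))$ and preferences satisfy within-school monotonicity, then no student held at $s$ has a lower-level contract with $s$ in the demand set. Hence $\ch^{Sc}_s$ and $\overline{\ch}^{Sc}_s$ agree on that set (Lemma~\ref{lemma:demandcompletioneqm}). Consistency of $\overline{\ch}^{Sc}_s$ then makes every outcome of $\varphi$ stable with respect to the path-independent, size-monotonic profile $(\overline{\ch}^{Sc}_s)$. For that profile, DA (which by your first step is the same under both rules) is student-optimal, and the rural-hospitals theorem holds. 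Since $\mu^*$ remains stable after the truncation, a single truncation, which stays in $\mathcal R^{WM}$, gives $i$ a contract weakly better than $\mu^*_i$. That is a profitable deviation, and no induction on the number of acceptable contracts is needed. With this repair your fallback becomes a self-contained proof of necessity that bypasses the restricted-domain version of the HKW theorem. Without it, the proposal does not establish uniqueness.
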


	Proposition~\ref{prop:characterizeMech} shows that the three axioms introduced above have an exact counterpart at the market level. At a matching $\mu$, the contracts a school could plausibly have obtained are those that students weakly prefer to their own assignments, so the natural market-level analogue of an axiom is to require the school's assignment to be what the axiom prescribes from that set. Because Proposition~\ref{prop:characterization} shows that the three axioms together admit a unique solution, this requirement leaves the school no discretion: its assignment must be exactly what the scholarship choice rule would have selected from everything within its reach. 

	\section{Welfare Analysis}\label{sec:welfare}
	
	In this section, we investigate whether introducing a scholarship enables a school to recruit a better set of students. Fix a school $s_k$ and a scholarship level $m\notin M^{s_k}$. Initially, $s_k$ does not offer level $m$. When $s_k$ introduces this scholarship level, its set of scholarship levels becomes $M^{s_k}\cup\{m\}$; it assigns the level a capacity $q^m_{s_k}>0$ satisfying $q^m_{s_k}+\sum_{\ell\in M^{s_k}}q^\ell_{s_k}\leq q_{s_k}$; and the contracts $\{(i,s_k,m):i\in\mathcal I\}$ are added to the market. The total capacity $q_{s_k}$ remains fixed, so filling seats at the new scholarship level reduces the capacity remaining for admission without a scholarship. All other contracts, capacities, and school priorities remain unchanged.
	
	Let $\succeq^A_{\mathcal I}$ denote a student-preference profile after the scholarship level is introduced. The corresponding profile $\succeq_{\mathcal I}$ before the introduction is obtained by deleting the contract $(i,s_k,m)$ from student $i$'s preference while leaving the ranking of all other contracts unchanged, for every student $i \in \mathcal{I}$. Let $\mu$ and $\mu'$ denote the student-proposing DA outcomes under $\succeq_{\mathcal I}$ and $\succeq^A_{\mathcal I}$, respectively. All other schools $s\neq s_k$ may offer any scholarship levels allowed by the model.
	
	Our first theorem shows that introducing a scholarship level does not generally guarantee that $s_k$ will be matched with a weakly better set of students in terms of merit. We construct a market in which the scholarship harms student quality at $s_k$ under Gale domination. This result holds for any school choice rule induced by the schools' priorities.
	
	\begin{thm}\label{thm:neg}
		Let $\mu$ be the stable matching generated from the student-proposing DA mechanism when $s_k$ does not provide a scholarship, and let $\mu'$ be the stable matching generated from the same mechanism when $s_k$ provides a scholarship.
		
		There exists a market such that, for any choice rule induced by the priority scores, $\iota(\mu_{s_k}) \triangleright^G \iota(\mu'_{s_k})$.
	\end{thm}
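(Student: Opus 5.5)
The proof will be an explicit construction of a small market. The aim is to make every school's choice problem along the deferred-acceptance run so simple that \emph{any} choice rule induced by the merit scores selects the same contracts. That is, at every step each school either receives a single proposal it can accept within its capacities, or receives two proposals at the same scholarship level competing for a single seat, so that it must keep the higher-merit student. The mechanism behind the example is the chain described in the introduction. The scholarship attracts a lower-merit student to $s_k$. This vacates a seat elsewhere, and the vacated seat is taken by the higher-merit student who previously had to settle for $s_k$.

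Concretely, I would take two schools $s_1=s_k$ and $s_2$, both with $q_{s_1}=q_{s_2}=1$, and two students $a$ and $c$. Initially neither school offers scholarships. After the change, $s_1$ introduces a level $m$ with $q^m_{s_1}=1$. Merit scores satisfy $\pi_{s_1}(a)>\pi_{s_1}(c)$ and $\pi_{s_2}(c)>\pi_{s_2}(a)$. Preferences after the introduction are
\[
a:\ (a,s_2,0)\succ_a (a,s_1,m)\succ_a (a,s_1,0)\succ_a\emptyset,\qquad
c:\ (c,s_1,m)\succ_c (c,s_2,0)\succ_c (c,s_1,0)\succ_c\emptyset .
\]
These satisfy within-school monotonicity. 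Deleting the $m$-contracts gives the profile before the introduction, in which both students rank $s_2$ first and $s_1$ second.

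I would then run the student-proposing DA algorithm twice.

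Before the introduction: both students propose to $s_2$. Since $s_2$ has one seat and both contracts are at level $0$, any priority-induced rule keeps $c$ and rejects $a$. Student $a$ then proposes $(a,s_1,0)$ and is accepted. Hence $\mu_{s_1}=\{(a,s_1,0)\}$.

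After the introduction: $a$ proposes $(a,s_2,0)$ and $c$ proposes $(c,s_1,m)$. Each school receives a single contract that fits its capacities, so no rejection occurs and the algorithm stops. Hence $\mu'_{s_1}=\{(c,s_1,m)\}$.

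Finally, $\{a\}\trianglerighteq^G\{c\}$ holds because $\pi_{s_1}(a)>\pi_{s_1}(c)$, while $\{c\}\ntrianglerighteq^G\{a\}$. Therefore $\iota(\mu_{s_1})\triangleright^G\iota(\mu'_{s_1})$.

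The step I expect to require the most care is the claim \emph{for any choice rule induced by the priority scores}. I would make precise that such a rule must accept a lone proposal that fits within capacity, and must resolve a two-student, same-level, one-seat competition in favour of the higher-merit student. Both properties hold for $\ch^{Sc}_s$ and, by scholarship feasibility, maximality and no justified envy (Proposition~\ref{prop:characterization}), for any rule respecting those axioms. Because no step of either run presents a school with mixed-level competition, the outcome does not depend on how levels are processed. If needed, I would also note that both outcomes are stable, so the conclusion does not hinge on the particular stable mechanism used.
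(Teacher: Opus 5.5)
Your construction is the paper's own example up to relabeling: your $s_1$, $s_2$, $a$, $c$ correspond to the paper's scholarship school $s_2$, its rival $s_1$, and students $i_1$, $i_2$, with the same preferences, the same opposed merit rankings, the same two deferred-acceptance runs and the same Gale comparison, so your proof is correct and follows the paper. Your explicit remark that any priority-induced rule must accept a lone feasible proposal and must resolve a same-level, one-seat competition by merit makes precise what the paper only asserts in its closing sentence.
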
    
	
	In other words, we construct a market where as long as the choice rule follows the priority ranking, $\iota(\mu_{s_k})$ strictly Gale-dominates $\iota(\mu'_{s_k})$. Since we cannot obtain a positive result guaranteeing that $s_k$ is matched with a better set of students after providing the scholarship, we turn to restricting the priority ranking of the schools. Below, we seek the maximal domain of the priority ranking of the schools such that providing the scholarship guarantees a better set of students in terms of merit.
	
	\subsection{Common Priority Ranking}
    To begin with, we consider the common priority ranking. It turns out that a common priority ranking is necessary and sufficient for scholarship provision to guarantee a better set of students. We define common priority ranking as follows:

    \begin{defn}[Common priority ranking]
    The school-priority profile has a \emph{common (priority) ranking} if, for every pair of students \(i,j\) and every pair of schools \(s,t\),
    \[
    \pi_s(i) > \pi_s(j)
    \quad\Longleftrightarrow\quad
    \pi_t(i) > \pi_t(j).
    \]
    \end{defn}

	For the characterization below, we also impose \emph{universal scholarship eligibility}: for every school $s\in\mathcal S$, every scholarship level $m\in M^s$, and every student $i\in\mathcal I$, $(i,s,m)\in\mathcal X.$ Thus, every student is eligible for every scholarship level offered by a school, while the scholarship choice rule determines who receives the seats. This assumption is not needed for the results in Section~\ref{sec:choicerule}. Theorem~\ref{thm:neg} remains valid even under this assumption, since its counterexample makes the new scholarship level available to both students.
    
    \begin{thm}\label{thm:commonrank}
		Fix $\mathcal I$, $\mathcal S$, and $(\pi_s)_{s\in\mathcal S}$. The school-priority profile has a common ranking if and only if, for every market constructed as above and every $\succeq^A_{\mathcal I}\in\mathcal R^{WM}$,
		\[
		\iota(\mu'_{s_k})\trianglerighteq^G\iota(\mu_{s_k}).
		\]
    \end{thm}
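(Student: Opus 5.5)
Both directions will be handled separately. Sufficiency is a comparison of DA outcomes under a common ranking; necessity is a counterexample built from a pair of students ranked differently by two schools.

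\emph{Sufficiency.} Assume a common ranking $\pi$. With a common ranking, every school's choice rule, and the completion $\overline{\ch}^{Sc}_s$ of Proposition~\ref{prop:PathI}, rank students identically. I expect the student-proposing DA outcome to be reproducible by a serial-dictatorship-style procedure: process students in decreasing $\pi$, and give each her most preferred contract among those still feasible, i.e.\ the school's total capacity is not exhausted and the level capacity at that contract's level is not exhausted. To justify this I would use the stability of DA with respect to the completion, together with the fact that under a common ranking the stable matching is unique. Any blocking argument then runs by induction down the ranking: the top student must get her favorite feasible contract, and so on.

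With this description, compare $\mu$ and $\mu'$ student by student in the common order. I will prove by induction on rank $r$ the following claim: among the top $r$ students, the number assigned to $s_k$ under $\mu'$ is at least the number assigned under $\mu$, whenever $s_k$ is not yet full under $\mu'$.

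The key observation concerns universal scholarship eligibility and within-school monotonicity. Adding level $m$ only enlarges each student's option set at $s_k$, and every student weakly prefers the new contract $(i,s_k,m)$ to $(i,s_k,0)$. So a student who would choose $s_k$ under $\mu$ still prefers some $s_k$ contract to anything else available. The remaining worry is that a higher-ranked student has used up a seat at $s_k$ or elsewhere differently. A student leaving another school $t$ for $s_k$ frees a seat at $t$, which can only attract others away from $s_k$ if it went to someone lower ranked.

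Gale domination $\iota(\mu'_{s_k})\trianglerighteq^G\iota(\mu_{s_k})$ then follows from a prefix-counting statement. For every threshold $r$, the number of top-$r$ students at $s_k$ under $\mu'$ is at least $\min\{q_{s_k},\ \text{the number under }\mu\}$. With equal totals or a full school, this counting is equivalent to Gale domination. I expect the hard step to be controlling capacity interactions: the new level's seats count against $q_{s_k}$, and a student may take a scholarship seat at $s_k$ that she would otherwise not use. I would handle this by noting that the total capacity is unchanged and that level capacities only restrict $s_k$'s own students. A student blocked at level $m$ by capacity can still take $(i,s_k,0)$ if total capacity remains. Hence the set of $s_k$-acceptable options for student $i$ at her turn is weakly larger after the introduction, precisely when fewer than $q_{s_k}$ higher-ranked students have taken $s_k$. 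I would prove this by strong induction over the serial order.

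\emph{Necessity.} Suppose the ranking is not common. Then there exist students $i,j$ and schools $s,t$ with $\pi_s(i)>\pi_s(j)$ but $\pi_t(j)>\pi_t(i)$. I will embed the construction of Theorem~\ref{thm:neg} using just these two students and two schools, choosing $s_k$ appropriately, unit capacities, and preferences in $\mathcal R^{WM}$. All other students find all contracts unacceptable, or are put in a dummy school. The chain is as follows. Without the scholarship, the student whom $s_k$ ranks higher is matched to $s_k$. With the scholarship, a student switches, creating a vacancy that leads the higher student to the other school, and $s_k$ ends up with the lower student. This gives $\iota(\mu_{s_k})\triangleright^G\iota(\mu'_{s_k})$, contradicting the hypothesis. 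The main check is that the construction respects universal eligibility, which is why I would offer the new level to both students, and that DA outcomes can be computed directly in this small market.
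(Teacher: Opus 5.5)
The gap is in the sufficiency direction, at its one nontrivial step. Your necessity direction is fine. Take $s_k:=s$, the rival school $t$, unit capacities, and the preferences of Theorem~\ref{thm:neg} with $i_1:=i$ and $i_2:=j$. This gives $\iota(\mu_{s})=\{i\}\triangleright^G\{j\}=\iota(\mu'_{s})$. Offering the new level to both students respects universal eligibility. The paper uses a variant in which $s_a$ already offers level $m$; either works.

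Your induction hypothesis is that, among the top $r$ students, the count at $s_k$ under $\mu'$ is at least the count under $\mu$. This is not inductive on its own. The critical case is when the two counts are equal after $i_1,\dots,i_{r-1}$, $s_k$ is not full, and $i_r$ is at $s_k$ under $\mu$ but not under $\mu'$. You observe that such a student still ranks an $s_k$ contract above her old alternatives. That does not stop her from leaving, because availability at other schools changes. Take a common ranking with $a$ above $b$, unit capacities at $s_k$ and $t$, $q^m_{s_k}=1$, $a:(a,s_k,m)\succ(a,t,0)\succ(a,s_k,0)$ and $b:(b,t,0)\succ(b,s_k,m)\succ(b,s_k,0)$. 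Then $b$ is at $s_k$ under $\mu$, but under $\mu'$ she takes the seat at $t$ vacated by $a$.

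Your remark that a freed seat can pull someone out of $s_k$ only after a higher-ranked student entered $s_k$ is the right intuition, but it is not a proof. You must show that each departure from $s_k$ can be charged to a distinct higher-merit entry that has not already been cancelled by an earlier departure, for instance by a student pushed out of $s_k$ by its total capacity. The bare count at $s_k$ carries no information about which vacancies elsewhere are still open. Your comments that $s_k$'s own options become weakly better do not address this; the difficulty lies entirely at the other schools.

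Closing the gap requires an invariant that tracks vacancies and surpluses across all schools. If all new contracts are added at once, several vacancy chains can interleave, and that bookkeeping becomes delicate. The paper avoids it as follows:
\begin{itemize}
\item It adds the contracts $(i_r,s_k,m)$ one student at a time in merit order.
\item It defines the intermediate outcomes directly as serial-dictatorship outcomes. The intermediate markets violate universal eligibility, so Claim~\ref{claim:sd-common} is used only at the endpoints.
\item Each step then creates a single chain. The occupancy gap at $s_k$ stays in $\{0,1\}$, and entries and departures alternate.
\item Each departing student is paired with a strictly higher-merit entrant, so Lemma~\ref{lem:gale-injection} gives Gale domination for that step.
\item Lemma~\ref{lem:gale-transitive} chains the steps together.
\end{itemize}

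A smaller point concerns your identification of DA with the merit serial dictatorship. It matches Claim~\ref{claim:sd-common}, but ``uniqueness of the stable matching under common ranking'' is not free. It relies on universal eligibility and within-school monotonicity through the paper's property $(*)$: positive levels at a school are filled by higher-merit students before anyone takes a zero-level seat there. Without this the serial dictatorship need not be stable, as the paper's Example shows.
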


	Theorem~\ref{thm:commonrank} identifies the maximal domain of school priorities on which a scholarship guarantees $s_k$ a weakly better set of students: every school must rank students in exactly the same way. The force of common ranking is that it collapses the mechanism into a serial dictatorship, in which students are called in decreasing order of merit and each takes her favorite remaining contract. This lets us introduce the new scholarship contracts one student at a time, so that each step changes the options of a single student. If she declines the scholarship, nothing moves. If she takes it, she vacates her former seat and sets off a single chain of moves that runs strictly downward through the merit order. Along that chain a student can leave $s_k$ only after a higher-merit student has entered it, so departures pair off with entrants and the entrant is always the stronger of the two. Each step therefore weakly improves $s_k$, and Gale domination is transitive. 
	
	Once schools rank students differently, there is no single order for the chain to follow: the scholarship may draw in the student $s_k$ ranks lowest while the student it displaces is absorbed by a rival that ranks her highly, and the pairing fails. Universal scholarship eligibility protects the same order from a different direction: because the scholarship choice rule awards scholarship seats before it fills the school's remaining non-scholarship capacity, a level open to only some students would let a lower-merit student claim a scholarship seat and crowd out a higher-merit applicant who is ineligible for that level and can compete only for a non-scholarship seat. We provide an example which demonstrates that scholarship provision can lead to worse student pool when universal scholarship eligibility is not assumed:

	\begin{exmp}
		Consider two students, $i$ and $j$, and two schools, $s_a$ and $s_b$. Let $q_{s_a} = q_{s_b} = 1$, and both schools use the scholarship choice rule. Assume both schools have the same priority ranking: $ \pi(i) > \pi(j).$ Consider $s_a$ already offers scholarship level $h$, with $q_{s_a}= q^h_{s_a} = 1$, and only the lower-merit student $j$ is eligible for this scholarship. $s_b$ offers no scholarship initially, and it introduces scholarship level $m$, with $q^m_{s_b} = 1$, for both students. We will focus on the student school $s_b$ matches to with/without scholarship level $m$.

		Let the preferences be
		$$
		\begin{aligned}
			i:\quad & (i,s_a,0) \succ_i(i,s_b,m) \succ_i(i,s_b,0) \succ_i \emptyset,\\
			j:\quad &(j,s_b,m) \succ_j(j,s_a,h) \succ_j(j,s_a,0) \succ_j(j,s_b,0) \succ_j \emptyset.
		\end{aligned}
		$$
		
		When $s_b$ does not introduce scholarship level $m$, we can delete the two level-$m$ contracts from the preference profiles. Both students first apply to $s_a$: student $i$ applies without a scholarship, while student $j$ applies for level $h$. The scholarship choice rule fills level $h$ first, so $s_a$ holds $(j,s_a,h)$ and rejects $(i,s_a,0)$. Student $i$ then applies to $s_b$ and $s_b$ accepts. Therefore,
		$$
		\mu_{s_a}=\{(j,s_a,h)\},
		\qquad
		\mu_{s_b}=\{(i,s_b,0)\}.
		$$

		When $s_b$ introduces level $m$, student $i$ still applies first to $s_a$ under $(i,s_a,0)$, while student $j$ now applies to her most-preferred contract $(j,s_b,m)$. Each school receives one application and accepts it. Hence,
		$$
		\mu'_{s_a}=\{(i,s_a,0)\},
		\qquad
		\mu'_{s_b}=\{(j,s_b,m)\}.
		$$

		$s_b$ matches to a worse student when introducing the scholarship contract.
	\end{exmp}
	
	Although common ranking might be plausible in the settings with a centralized exam system, it remains restrictive, since schools cannot weight subjects differently. If the exam score decomposes into a Math score and an English score, common ranking requires subject weights that every school agrees on.
	
	\section{Conclusion}\label{sec:conclude}

	Giving scholarships is one of the most popular methods colleges use to attract higher-achieving students. In this paper, we study scholarships based on academic merit. To capture how schools choose students across scholarship levels, we propose three axioms: \emph{scholarship feasibility}, \emph{scholarship maximality}, and \emph{no justified envy}, and we provide the unique choice rule, the \emph{scholarship choice rule}, that satisfies these axioms. Although this choice rule is not path independent, it admits a path-independent and size-monotonic completion, which allows us to establish stability and strategy-proofness through existing matching with contracts results. We then extend the choice rule characterization to the mechanism level. On the domain of student preferences satisfying within-school monotonicity, the student-proposing deferred-acceptance mechanism based on the scholarship choice rule is the unique mechanism, up to outcome equivalence, satisfying individual rationality, strategy-proofness, and the extensions of scholarship feasibility, scholarship maximality, and no justified envy.
	
	Given the institutional foundation for the choice rule, we answer the question of whether merit-based scholarships enable colleges to attract students with higher merit. Our main result shows that scholarships do not provide such a guarantee in general matching markets. For any school choice rule induced by school priority scores, there exists a market in which introducing a scholarship causes the school to be matched with a strictly worse set of students according to Gale-merit domination. Thus, scholarship provision can backfire even when scholarships are explicitly intended to attract higher-merit students. We then identify the precise condition under which this negative conclusion can be avoided. Under universal scholarship eligibility, a common priority ranking is necessary and sufficient for scholarship provision to guarantee a weak improvement in the scholarship-providing school's student pool for every student-preference profile satisfying within-school monotonicity. This condition is particularly relevant in centralized admission systems in which all schools rank students using the same scores from the entrance examination.

	\medskip
	\begin{center}
		\textit{Declaration of generative AI and AI-assisted technologies in the manuscript preparation process } 
	\end{center}

	\textit{During the preparation of this work, the author used ChatGPT and Claude for proof checking and writing polishing. The author reviewed and edited the output as needed and take full responsibility for the content of the published article.}

	\bibliography{ref_scholarship}
	\bibliographystyle{apalike}
	

	\newpage
	\begin{appendices}
    \section{Omitted Proofs}\label{sec:appen}
		\subsection{Main Results}
		\subsubsection*{Proof of Proposition~\ref{prop:characterization}}
		 \begin{proof}
			\noindent
			\emph{(If part.)}
			Let \(X\subseteq\mathcal X_s\). By construction, \(\ch_s^{Sc}(X)\) contains at most one contract for each student, contains at most \(q_s\) contracts, and contains at most \(q_s^m\) contracts at every positive scholarship level \(m\). Therefore, it satisfies \emph{scholarship feasibility}.
			
			Now consider a contract \(x\in X\setminus \ch_s^{Sc}(X)\). If student \(\iota(x)\) is selected at a scholarship level strictly higher than \(\mathbf m(x)\), then the conclusion required by \emph{scholarship maximality} holds whenever its premise is satisfied, while \emph{no justified envy} does not apply. Suppose instead that student \(\iota(x)\) is not selected at a higher level. Then she remains available when the algorithm processes level \(\mathbf m(x)\). If the effective capacity of that level were not exhausted, the algorithm would select \(x\). Hence, the effective capacity must be exhausted. Because the algorithm selects students in decreasing order of merit within each level, every student selected at level \(\mathbf m(x)\) has a higher merit score than \(\iota(x)\). Thus, \emph{no justified envy} is satisfied. Therefore, \(\ch_s^{Sc}\) satisfies all three axioms.
			
			\medskip
			
			\noindent
			\emph{(Only-if part.)}
			Suppose that a choice rule \(D\) satisfies \emph{scholarship feasibility}, \emph{scholarship maximality}, and \emph{no justified envy}. Fix an arbitrary set of contracts \(X\subseteq\mathcal X_s\), and write $Y:=D(X)$ and $C:=\ch_s^{Sc}(X).$ We show that \(Y=C\). 
			
			Order the positive scholarship levels as $m_K>m_{K-1}>\cdots>m_1>0.$ We proceed from the highest scholarship level to the lowest.
			
			For each \(k\), let \(H_k\) denote the set of students selected at scholarship levels strictly higher than \(m_k\). By induction, this set will be the same under \(Y\) and \(C\). Define
			\[
			A_k :=
			\left\{
			i\notin H_k: \text{ there exists }x\in X \text{ with }\iota(x)=i \text{ and }\mathbf m(x)=m_k
			\right\}.
			\]
			\(A_k\) is the set of students who have an available contract at level \(m_k\) and have not been selected at a higher level. We claim that both \(Y\) and \(C\) select at level \(m_k\) the highest-merit \(\min\{|A_k|,q_s^{m_k}\}\) students in \(A_k\).
			
			First, scholarship feasibility implies $|Y^{m_k}|\leq q_s^{m_k}$ and, because \(Y\) contains at most one contract for each student, every student selected in \(Y^{m_k}\) belongs to \(A_k\).
			
			Suppose for contradiction that $|Y^{m_k}|<\min\{|A_k|,q_s^{m_k}\}.$ Then there exists a student \(i\in A_k\) whose level-\(m_k\) contract \(x\) is not selected by \(Y\). Because $|Y^{m_k}|<q_s^{m_k}$, scholarship maximality implies that \(D(X)\) selects another contract \(z\) of student \(i\) with $\mathbf m(z)>m_k$. This contradicts \(i\in A_k\), since students in \(A_k\) have not been selected at a higher scholarship level. Therefore,
			\[
			|Y^{m_k}|=\min\{|A_k|,q_s^{m_k}\}.
			\]

			If \(|A_k|\leq q_s^{m_k}\), this equality implies that every student in \(A_k\) is selected at level \(m_k\). If \(|A_k|>q_s^{m_k}\), then $|Y^{m_k}|=q_s^{m_k}.$ For every rejected student \(i\in A_k\), no contract associated with \(i\) is selected at a higher scholarship level. No justified envy therefore implies that every student selected in \(Y^{m_k}\) has a higher merit score than \(i\). Hence, \(Y^{m_k}\) consists precisely of the \(q_s^{m_k}\) highest-merit students in \(A_k\). This is exactly the selection made by \(\ch_s^{Sc}\) at level \(m_k\). Thus, $Y^{m_k}=C^{m_k}.$ 
			
			At the highest level \(m_K\), the set \(H_K\) is empty, so the argument establishes equality at \(m_K\). Applying the same argument recursively establishes $Y^{m_k}=C^{m_k}$ for every positive scholarship level \(m_k\).
			
			It remains to consider the no-scholarship level. Let
			\[
			H_0 :=
			\left\{
			i: \text{student }i\text{ is selected at some positive scholarship level}
			\right\}.
			\]
			We have already shown that \(H_0\) is the same under \(Y\) and \(C\). Let
			\[
			A_0 :=
			\left\{
			i\notin H_0: \text{ there exists }x\in X \text{ with }\iota(x)=i \text{ and }\mathbf m(x)=0.
			\right\}
			\]
			Scholarship feasibility implies
			\[
			|Y^0|\leq \bar q_s^0(Y)=q_s-|H_0|.
			\]
			
			Suppose for contradiction that $|Y^0|<\min\{|A_0|,\bar q_s^0(Y)\}.$ Then there exists a student \(i\in A_0\) whose no-scholarship contract \(x\) is rejected. Because $|Y^0|<\bar q_s^0(Y)$, scholarship maximality implies that student \(i\) is selected under a positive-scholarship contract, contradicting \(i\in A_0\). Therefore,
			\[
			|Y^0|=\min\{|A_0|,\bar q_s^0(Y)\}.
			\]

			If \(|A_0|\leq\bar q_s^0(Y)\), every student in \(A_0\) is selected. If \(|A_0|>\bar q_s^0(Y)\), no positive-scholarship contract associated with a student in \(A_0\) is selected. No justified envy therefore implies that the \(\bar q_s^0(Y)\) selected students are precisely the highest-merit students in \(A_0\). Since \(Y\) and \(C\) coincide at every positive scholarship level, they have the same residual capacity; hence, \(C\) makes exactly the same selection at level zero. Thus, $Y^0=C^0.$

			We have shown that \(Y\) and \(C\) coincide at every positive scholarship level and at the no-scholarship level. Hence, $D(X) = \ch^{Sc}_s(X)$. Since \(X\subseteq\mathcal X_s\) was chosen arbitrarily, \(D=\ch_s^{Sc}\).
		\end{proof}

		\subsubsection*{Proof of Proposition~\ref{prop:PathI}}

		\begin{proof}
			We first show that $\ch^{Sc}_s$ is not path independent. Write
			$x_i^m=(i,s,m)$, let
			$\pi_s(i_1)>\cdots>\pi_s(i_5)$, and set
			$q_s^{m_2}=q_s^{m_1}=1$ and $q_s=3$, where $m_2>m_1>0$.
			Consider
			\[
			X=\{x_1^{m_2},x_1^0,x_2^{m_1},x_3^{m_2},x_3^{m_1},
			x_4^{m_1},x_4^0,x_5^{m_2},x_5^{m_1},x_5^0\}.
			\]
			The original rule chooses $\ch^{Sc}_s(X)=\{x_1^{m_2},x_2^{m_1},x_4^0\}.$ Let
			\[
			Y=\{x_1^0,x_2^{m_1},x_3^{m_2},x_4^0\},
			\qquad Z=X\setminus Y.
			\]
			Then $\ch^{Sc}_s(Y)=\{x_1^0,x_2^{m_1},x_3^{m_2}\},$ and a direct application of the rule yields
			\[
			\ch^{Sc}_s\bigl(Z\cup \ch^{Sc}_s(Y)\bigr)
			=\{x_1^{m_2},x_2^{m_1},x_5^0\}
			\neq \ch^{Sc}_s(Z\cup Y)=\ch^{Sc}_s(X).
			\]
			Thus $\ch^{Sc}_s$ is not path independent. It remains to construct the claimed path-independent and size-monotonic completion.

			We propose a completion. The completion treats each positive scholarship level as a separate competition. A student selected at one level remains eligible at every other level. Thus, selection at a higher scholarship level neither guarantees nor prevents selection of the same student's lower-level contract. After making all positive-level choices, the rule fills the remaining total capacity with the highest-merit zero-scholarship contracts. Because the positive scholarship level choices are independent, we can process in any order of scholarship level $m \in M^s$.

			\setcounter{algocf}{1}
			\begin{algorithm}[H]
			\caption{Level-by-Level Scholarship Choice Completion $\overline{\ch}^{Sc}_s$}
			\label{alg:level-completion}
			\KwIn{A set of contracts $X\subseteq\mathcal X_s$}
			\KwOut{The chosen set $\overline{\ch}^{Sc}_s(X)$}

			$C^+\leftarrow\varnothing$\;
			\ForEach{$m\in M^s$, in any order}{
			$X^m\leftarrow\{x\in X:\mathbf m(x)=m\}$\;
			$C^m\leftarrow$ the top $\min\{|X^m|,q_s^m\}$ contracts in $X^m$ according to $\pi_s$\;
			$C^+\leftarrow C^+\cup C^m$\;
			}
			$r\leftarrow q_s-|C^+|$\;
			$X^0\leftarrow\{x\in X:\mathbf m(x)=0\}$\;
			$C^0\leftarrow$ the top $\min\{|X^0|,r\}$ contracts in $X^0$ according to $\pi_s$\;
			\Return{$C^+\cup C^0$}\;
			\end{algorithm}

			Every chosen contract counts separately toward total capacity, even when several chosen contracts are associated with the same student. Since $\sum_{m\in M^s}q_s^m\leq q_s$, the residual capacity $r$ is always nonnegative. Therefore, Algorithm~\ref{alg:level-completion} respects both total capacity and every positive scholarship capacity.

			For each $m\in M^s$, let $T_m(X)$ be the top $\min\{|X^m|,q_s^m\}$ contracts in $X^m$ according to $\pi_s$, and define
			\[
			a(X):=\sum_{m\in M^s}|T_m(X)|
			=\sum_{m\in M^s}\min\{|X^m|,q_s^m\},
			\qquad r(X):=q_s-a(X).
			\]
			For any menu $Z\subseteq\mathcal X_s$ and nonnegative integer $k$, let $T_0^k(Z)$ denote the top $\min\{|Z^0|,k\}$ contracts in $Z^0$. Algorithm~\ref{alg:level-completion} can then be written as
			\begin{equation}\label{eq:level-representation}
			\overline{\ch}^{Sc}_s(X)
			=T_0^{r(X)}(X) \cup \bigcup_{m\in M^s}T_m(X).
			\end{equation}

			Next, we propose three lemmas to complete the proof. The first shows that $\overline{\ch}^{Sc}_s$ is indeed a completion of $\ch^{Sc}_s$, the second and the third verify that $\overline{\ch}^{Sc}_s$ satisfies size monotonicity and path independence, respectively. 

			\begin{lemma}\label{lem:completion}
			The choice rule $\overline{\ch}^{Sc}_s$ is a completion of $\ch^{Sc}_s$.
			\end{lemma}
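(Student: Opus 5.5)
We must show that for every $X\subseteq\mathcal X_s$, either $\overline{\ch}^{Sc}_s(X)$ contains two distinct contracts of the same student, or $\overline{\ch}^{Sc}_s(X)=\ch^{Sc}_s(X)$. We argue the contrapositive of the first alternative. Assume $\overline{\ch}^{Sc}_s(X)$ contains at most one contract per student, and show that the level-by-level selection of Algorithm~\ref{alg:level-completion} reproduces the sequential selection of the scholarship choice rule level by level. Write $\overline C=\overline{\ch}^{Sc}_s(X)$ and $C=\ch^{Sc}_s(X)$, and use representation \eqref{eq:level-representation}.

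\textbf{Positive levels.} Process the positive levels from highest to lowest, $m_K>\cdots>m_1$, and prove by induction that $C^{m}=T_m(X)$. Let $H$ be the set of students chosen by $C$ at levels above $m$. By the induction hypothesis, $H$ is also the set of students in $\bigcup_{m'>m}T_{m'}(X)$. The scholarship choice rule picks the top $\min\{|A|,q_s^m\}$ students in $A=\iota(X^m)\setminus H$, while $T_m(X)$ picks the top $\min\{|X^m|,q_s^m\}$ students in $\iota(X^m)$.

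Because $\overline C$ has no duplicate students, no student of $H$ appears in $T_m(X)$. So $T_m(X)$ consists of students of $A$ only. It remains to check that $T_m(X)$ is exactly the top-$\min\{|A|,q_s^m\}$ set of $A$.

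First, suppose $|T_m(X)|<q_s^m$. Then $T_m(X)=X^m$, so $H\cap\iota(X^m)=\emptyset$. Hence $A=\iota(X^m)$ and the two selections coincide.

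Otherwise $|T_m(X)|=q_s^m$, and $T_m(X)$ consists of $q_s^m$ students of $A$ that are top within the larger set $\iota(X^m)$. Since $A\subseteq\iota(X^m)$, they are also top within $A$, and $|A|\ge q_s^m$. So the selections again agree.

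\textbf{Level zero.} Having $C^m=T_m(X)$ for every positive level, the residual capacities agree:
\[
r(X)=q_s-a(X)=q_s-\sum_{m\in M^s}|C^m|.
\]
The same argument then applies at level $0$. Let $H_0$ be the set of students selected at positive levels, and let $A_0=\iota(X^0)\setminus H_0$. The no-duplicate assumption gives $T_0^{r(X)}(X)\cap H_0=\emptyset$. If $T_0^{r(X)}(X)$ is not full, it equals all of $X^0$, so $H_0\cap\iota(X^0)=\emptyset$ and $A_0=\iota(X^0)$. If it is full, its members are top in $\iota(X^0)$, hence top in $A_0$. In both cases $C^0=T_0^{r(X)}(X)$, so $\overline C=C$.

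\textbf{Main obstacle.} The delicate step is the induction at each positive level. The completion ranks over all of $X^m$, including students already taken at higher levels, whereas $\ch^{Sc}_s$ ranks only over the remaining students. The no-duplicate hypothesis is exactly what rules out a higher-level student occupying a slot in $T_m(X)$. Once that is excluded, the "top of a superset restricted to a subset" observation closes the argument. Care is needed to separate the full and non-full cases, since in the non-full case one must deduce $H\cap\iota(X^m)=\emptyset$.
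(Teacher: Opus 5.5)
Your proof is correct and follows essentially the same route as the paper. Both arguments run an induction down the positive levels, use the no-duplicate hypothesis to show that no student selected at a higher level occupies a slot in $T_m(X)$, and then apply a top-$k$ restriction argument, which the paper phrases as "deleting contracts outside the top $k$ does not change the top-$k$ choice." Your full/non-full case split is simply an explicit unpacking of that observation. The level-zero step, where you first match the residual capacities, is the same as in the paper.
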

			\begin{proof}
			Fix $X\subseteq\mathcal X_s$. If $\overline{\ch}^{Sc}_s(X)$ contains two contracts associated with the same student, then the first condition in Definition~\ref{def:completion} holds.

			Suppose instead that $\overline{\ch}^{Sc}_s(X)$ contains at most one
			contract for each student. We use the following elementary observation:
			deleting contracts that lie outside the top $k$ contracts of a strictly
			ordered set does not change its top-$k$ choice. At the highest positive
			level, $\overline{\ch}^{Sc}_s$ and $\ch^{Sc}_s$ make the same choice.
			Proceeding downward, suppose the rules have agreed at all higher levels.
			The original rule deletes the lower-level contracts of students already
			selected. None of those deleted contracts belongs to $T_m(X)$; otherwise
			$\overline{\ch}^{Sc}_s(X)$ would contain two contracts for that student.
			The observation therefore implies that the two rules also agree at level
			$m$. Induction gives agreement at every positive level.

			The two rules consequently have the same residual total capacity. Any
			zero-scholarship contract deleted by the original rule belongs to a student
			selected at a positive level and, by the maintained supposition, lies
			outside $T_0^{r(X)}(X)$. The same observation gives agreement at level
			zero. Hence $\overline{\ch}^{Sc}_s(X)=\ch^{Sc}_s(X)$, so the second
			condition in Definition~\ref{def:completion} holds.
			\end{proof}

			\begin{lemma}\label{lem:size-monotonic}
			The choice rule $\overline{\ch}^{Sc}_s$ is size monotonic.
			\end{lemma}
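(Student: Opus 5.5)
The plan is to use the representation \eqref{eq:level-representation}, so that the size of the chosen set is an explicit function of level counts. For each $m\in M^s$, $|T_m(X)|=\min\{|X^m|,q_s^m\}$. At level zero, $|T_0^{r(X)}(X)|=\min\{|X^0|,r(X)\}$. Hence
\[
|\overline{\ch}^{Sc}_s(X)| = a(X) + \min\{|X^0|,\,q_s-a(X)\} = \min\{a(X)+|X^0|,\; q_s\}.
\]
Within each $T_m$ the contracts are distinct, and different levels contain different contracts. The union in \eqref{eq:level-representation} is therefore disjoint, which justifies adding the cardinalities. I will verify this disjointness first.

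Next I fix $X\subseteq Y\subseteq\mathcal X_s$. For every level $m$, $X^m\subseteq Y^m$ gives $|X^m|\le|Y^m|$. Since $t\mapsto\min\{t,q_s^m\}$ is nondecreasing, $a(X)\le a(Y)$. Combined with $|X^0|\le|Y^0|$, this shows that $a(X)+|X^0|\le a(Y)+|Y^0|$. The map $t\mapsto\min\{t,q_s\}$ is also nondecreasing, so $|\overline{\ch}^{Sc}_s(X)|\le|\overline{\ch}^{Sc}_s(Y)|$, which is size monotonicity.

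The only delicate point is the closed-form identity. The case split is on whether $|X^0|\le q_s-a(X)$. If it holds, the size is $a(X)+|X^0|\le q_s$. Otherwise the size is $q_s$, which is below $a(X)+|X^0|$. In both cases the size equals $\min\{a(X)+|X^0|,q_s\}$, and the nonnegativity of $r(X)$ noted in the paper ensures the formula makes sense. I expect no real obstacle beyond this bookkeeping.
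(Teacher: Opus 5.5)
Your proposal is correct and matches the paper's proof: both use the level representation to write the size as $\min\bigl\{q_s,\,|X^0|+\sum_{m\in M^s}\min\{|X^m|,q_s^m\}\bigr\}$, and both note that every term is weakly increasing when the menu grows. Your explicit check that the union in the representation is disjoint, which justifies adding the cardinalities, is a detail the paper leaves implicit.
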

			\begin{proof}
			Equation~\eqref{eq:level-representation} gives
			\begin{align*}
			|\overline{\ch}^{Sc}_s(X)|
			&=a(X)+\min\{|X^0|,q_s-a(X)\}\\
			&=\min\left\{q_s,
			|X^0|+\sum_{m\in M^s}\min\{|X^m|,q_s^m\}\right\}.
			\end{align*}
			If $X\subseteq Y$, each term inside the final minimum weakly increases.
			Thus $|\overline{\ch}^{Sc}_s(X)|\leq
			|\overline{\ch}^{Sc}_s(Y)|$, as required by
			Definition~\ref{def:sizemon}.
			\end{proof}

			\begin{lemma}\label{lem:completion-pi}
			The choice rule $\overline{\ch}^{Sc}_s$ is path independent.
			\end{lemma}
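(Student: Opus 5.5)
We need to show path independence of $\overline{\ch}^{Sc}_s$ as given by the representation \eqref{eq:level-representation}. The natural route is the Aizerman–Malishevski characterization recalled in Section~\ref{sec:choicerule}: a choice rule is path independent if and only if it is substitutable and consistent. So the plan is to verify these two properties separately, working level by level, because \eqref{eq:level-representation} makes $\overline{\ch}^{Sc}_s$ a union of independent top-$k$ selections on the disjoint blocks $X^m$ ($m\in M^s$) and $X^0$. The only coupling between blocks is the residual capacity $r(X)=q_s-\sum_{m\in M^s}\min\{|X^m|,q_s^m\}$.

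For \emph{consistency}, suppose $\overline{\ch}^{Sc}_s(X)\subseteq Y\subseteq X$. For each positive level $m$, we have $T_m(X)\subseteq Y^m\subseteq X^m$. By the elementary observation already used in Lemma~\ref{lem:completion}, deleting contracts outside the top $q_s^m$ leaves the top choice unchanged, so $T_m(Y)=T_m(X)$. Next, $\min\{|Y^m|,q_s^m\}=|T_m(Y)|=|T_m(X)|=\min\{|X^m|,q_s^m\}$, so $a(Y)=a(X)$ and $r(Y)=r(X)$. Applying the same observation to $X^0$ then gives $T_0^{r(Y)}(Y)=T_0^{r(X)}(X)$.

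For \emph{substitutability}, suppose $x\notin\overline{\ch}^{Sc}_s(Y\cup\{x\})$ and add a contract $x'$. If $x$ is at a positive level $m$, only the top-$q_s^m$ selection within $X^m$ matters. Adding $x'$ either leaves $X^m$ unchanged (if $x'$ lies at a different level) or adds one contract, which cannot promote $x$ in the ranking by $\pi_s$. Hence $x$ stays rejected. If $\mathbf m(x)=0$, the key monotonicity fact is that $r(\cdot)$ is weakly decreasing in the menu, since each $\min\{|X^m|,q_s^m\}$ is weakly increasing. So adding $x'$ either weakly shrinks the residual capacity $r$ (when $x'$ is at a positive level) or adds a competitor at level zero with $r$ unchanged. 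In both cases the set of top contracts in $X^0$ cannot grow to include $x$: $x$ had rank above $r(Y\cup\{x\})$, and its rank weakly worsens while the cutoff weakly falls.

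The main obstacle is this level-zero interaction: substitutability must survive the fact that positive-level contracts crowd out zero-level seats through $r$. The plan handles it by the observation that $r$ is antitone in the menu, which is precisely why the completion counts every chosen contract separately rather than removing students selected at higher levels. Combining consistency and substitutability via \citet{aizerman1981general} then yields path independence.
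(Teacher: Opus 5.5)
Your proof is correct, but it takes a different route from the paper's. The paper verifies the identity $\overline{\ch}^{Sc}_s(X\cup Y)=\overline{\ch}^{Sc}_s(X\cup\overline{\ch}^{Sc}_s(Y))$ directly. At each positive level, the top-$q_s^m$ identity $T_m(X\cup Y)=T_m(X\cup T_m(Y))$ shows that both menus yield the same positive-level choices, and hence the same residual capacity $r^*$. Then $r^*\le r(Y)$ ensures that the level-zero contracts of $Y$ discarded by $\overline{\ch}^{Sc}_s(Y)$, which all lie below the top $r(Y)$, cannot re-enter the top $r^*$. You instead split path independence into consistency and substitutability and invoke the Aizerman--Malishevski equivalence. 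Both arguments rest on the same fact: the residual capacity $r(\cdot)$ is antitone in the menu, and the paper's step $r^*\le r(Y)$ is exactly your monotonicity step. Your consistency step also uses that trimming a menu to any superset of its chosen set leaves each $|T_m|$, and hence $r$, unchanged. The paper's route is a single self-contained computation that needs no external equivalence. Yours is more modular, and it yields substitutability and consistency of $\overline{\ch}^{Sc}_s$ as standalone facts. These are exactly the properties the paper later relies on: substitutability in Lemma~\ref{lemma:matchequalchoice} and consistency in the proof of Proposition~\ref{prop:characterizeMech}. If you want to avoid the citation, the direction you need takes two lines. Set $Z=X\cup Y$. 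Iterating your one-contract substitutability gives $\overline{\ch}^{Sc}_s(Z)\cap Y\subseteq\overline{\ch}^{Sc}_s(Y)$, so $\overline{\ch}^{Sc}_s(Z)\subseteq X\cup\overline{\ch}^{Sc}_s(Y)\subseteq Z$, and consistency then gives the identity.
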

			\begin{proof}
			Fix $X,Y\subseteq\mathcal X_s$. At every positive level $m$, the
			standard top-$q_s^m$ identity gives
			\[
			T_m(X\cup Y)=T_m\bigl(X\cup T_m(Y)\bigr).
			\]
			Hence the two menus $X\cup Y$ and
			$X\cup\overline{\ch}^{Sc}_s(Y)$ generate the same positive-level
			choices and therefore the same residual capacity. Denote it by
			\[
			r^*:=r(X\cup Y)
			=r\bigl(X\cup\overline{\ch}^{Sc}_s(Y)\bigr).
			\]
			Moreover, $r^*\leq r(Y)$ because adding $X$ can only weakly increase
			the number of positive-level contracts selected.

			Let $B:=T_0^{r(Y)}(Y)$. If $B=Y^0$, no zero-scholarship contract is
			removed from $Y$. Otherwise, $|B|=r(Y)$, and every contract in
			$Y^0\setminus B$ is ranked below every contract in $B$. Since
			$r^*\leq r(Y)$, no removed contract can enter the top $r^*$ after the
			contracts in $X$ are added. In either case,
			\[
			T_0^{r^*}(X\cup Y)
			=T_0^{r^*}\bigl(X\cup\overline{\ch}^{Sc}_s(Y)\bigr).
			\]
			Combining the positive- and zero-level choices proves
			\[
			\overline{\ch}^{Sc}_s(X\cup Y)
			=\overline{\ch}^{Sc}_s
			\bigl(X\cup\overline{\ch}^{Sc}_s(Y)\bigr).
			\]
			\end{proof}

			Lemmas~\ref{lem:completion},
			\ref{lem:size-monotonic}, and~\ref{lem:completion-pi} establish that
			$\overline{\ch}^{Sc}_s$ is the claimed completion.
		\end{proof}
		
		\subsubsection*{Proof of Proposition~\ref{prop:characterizeMech}}
		\begin{proof}
			We abbreviate the student-proposing DA mechanism based on $(\ch^{Sc}_s)_{s \in \mathcal{S}}$ as $DA^{Sc}$.

			\medskip
			\noindent
			\emph{(If part.)} Assume $\varphi(\succeq_{\mathcal{I}})$ is outcome equivalent to $DA^{Sc}(\succeq_{\mathcal{I}})$ for any $\succeq_{\mathcal{I}} \in \mathcal{R}^{WM}$. By \citet{hatfield2015hidden}, we know that running the cumulative-offer mechanism based on $(\ch^{Sc}_s)_{s \in \mathcal{S}}$ is equivalent to running the cumulative-offer mechanism based on $(\overline{\ch}^{Sc}_s)_{s \in \mathcal{S}}$ since Proposition~\ref{prop:PathI} shows it is a path-independent and size-monotonic completion. Thus, by \citet[Theorems A.2 and A.3]{hatfield2015hidden}, we know that the cumulative-offer mechanism based on $(\ch^{Sc}_s)_{s \in \mathcal{S}}$ is strategy-proof and produces a stable outcome. Furthermore, \citet{Hatfield2021Stability} showed that choice rule completion implies observable substitutability. From \citet[Proposition A.1]{Hatfield2021Stability}, we know that $DA^{Sc}(\succeq_{\mathcal{I}})$ is equivalent to the cumulative-offer mechanism based on $(\ch^{Sc}_s)_{s \in \mathcal{S}}$ given student-preference profile $\succeq_{\mathcal{I}}$. Therefore, we obtain that the matching outcome, denoted by $\mu^{Sc}$, generated from $DA^{Sc}(\succeq_{\mathcal{I}})$ is stable and strategy-proof. Thus, individual rationality and strategy-proofness are satisfied. 
			
			Next, consider the demand set $D_s(\mu^{Sc}, \succeq_{\mathcal I})$ for a school $s \in \mathcal{S}$. Lemma~\ref{lemma:matchequalchoice} shows that $\mu^{Sc}_s = \ch^{Sc}_s(D_s(\mu^{Sc}, \succeq_{\mathcal I}))$ for each school $s \in \mathcal{S}$, and by Proposition~\ref{prop:characterization}, we get \emph{scholarship feasibility}, \emph{scholarship maximality}, and \emph{no justified envy} for all $s \in \mathcal{S}$. Therefore, $DA^{Sc}(\succeq_{\mathcal{I}})$ satisfies the extension of \emph{scholarship feasibility}, \emph{scholarship maximality}, and \emph{no justified envy}.
			
			\medskip
			\noindent
			\emph{(Only-if part.)} Let \(\varphi\) be a mechanism satisfying properties (1)--(5) on \(\mathcal R^{WM}\). Fix \(\succeq_{\mathcal I}\in\mathcal R^{WM}\), and let $\mu:=\varphi(\succeq_{\mathcal I}).$ We will show that $\mu = \varphi(\succeq_{\mathcal{I}}) = DA^{Sc}(\succeq_{\mathcal{I}})$. Since $\mu_s \in \psi^F_s \bigl(D_s(\mu, \succeq_{\mathcal{I}})\bigr) \cap \psi^M_s \bigl(D_s(\mu, \succeq_{\mathcal{I}})\bigr) \cap \psi^N_s \bigl(D_s(\mu, \succeq_{\mathcal{I}})\bigr) = \{\ch^{Sc}_s(D_s(\mu, \succeq_{\mathcal{I}}))\}$, we have $\mu_s = \ch^{Sc}_s \bigl(D_s(\mu, \succeq_{\mathcal{I}})\bigr)$ for all $s \in \mathcal{S}$. 
			
			Next, we prove that $\mu$ is stable. By Lemma~\ref{lemma:demandcompletioneqm}, $\ch_s^{Sc}\bigl(D_s(\mu,\succeq_{\mathcal I})\bigr) = \overline{\ch}_s^{Sc}\bigl(D_s(\mu,\succeq_{\mathcal I})\bigr) = \mu_s.$ Because \(\overline{\ch}_s^{Sc}\) is path independent, it is consistent. Hence, for every \(Y\) such that $\mu_s\subseteq Y \subseteq D_s(\mu,\succeq_{\mathcal I}),$ we have
			\[
			\overline{\ch}_s^{Sc}(Y)=\mu_s.
			\]
			Since \(\mu_s\) contains at most one contract associated with each student, the definition of completion implies $\ch_s^{Sc}(Y) = \overline{\ch}_s^{Sc}(Y) = \mu_s.$ Taking \(Y=\mu_s\) gives school individual rationality under both choice rules. 
			
			Now suppose that \(Z\) is a candidate blocking set at school \(s\). The student-side blocking condition implies $Z\subseteq D_s(\mu,\succeq_{\mathcal I}),$ and therefore $\mu_s \subseteq \mu_s \cup Z \subseteq D_s(\mu,\succeq_{\mathcal I}).$ It follows that
			\[
			\ch_s^{Sc}(\mu_s\cup Z) = \overline{\ch}_s^{Sc}(\mu_s\cup Z) = \mu_s,
			\]
			hence \(Z\) cannot block under either choice rule. Together with student individual rationality, \(\mu\) is stable with respect to both $\ch_s^{Sc}$ and $\overline{\ch}_s^{Sc}$.

			Because $\succeq_{\mathcal{I}} \in \mathcal{R}^{WM}$ was arbitrary, we have now shown that $\varphi$ is a stable mechanism on $\mathcal{R}^{WM}$. By assumption, it is strategy-proof on this domain. Lemma~\ref{lem:sg-irc} and Lemma~\ref{lem:wm-admissible-subclass} verify the remaining conditions needed to apply the restricted-domain version of the characterization from \citet{Hatfield2021Stability}.

			By Proposition~\ref{prop:PathI}, \(\ch_s^{Sc}\) has a path-independent and size-monotonic completion. Together with Lemma~\ref{lem:sg-irc}, the result from \citet{Hatfield2021Stability} implies that \(\ch_s^{Sc}\) is observably substitutable on the full preference domain, and therefore also on \(\mathcal R^{WM}\).

			Let \(COM^{Sc}\) denote the cumulative-offer mechanism based on the choice-rule profile \(\bigl(\ch_s^{Sc}\bigr)_{s\in\mathcal S}\). Appendix B.4 of \citet{Hatfield2021Stability} explains that Theorem 1b continues to hold when student preferences are restricted to an admissible subclass. Because \(\varphi\) is stable and strategy-proof on \(\mathcal R^{WM}\), the restricted-domain version of Theorem 1b implies that, for every \(\succeq_{\mathcal I}\in\mathcal R^{WM}\),
			\[
			\varphi(\succeq_{\mathcal I})
			=
			COM^{Sc}(\succeq_{\mathcal I}).
			\]
			
			Finally, Proposition A.1 of \citet{Hatfield2021Stability} implies that, under observable substitutability, the cumulative-offer and deferred-acceptance mechanisms are outcome equivalent. Hence,
			\[
			\varphi(\succeq_{\mathcal I})
			=
			COM^{Sc}(\succeq_{\mathcal I})
			=
			DA^{Sc}(\succeq_{\mathcal I}),
			\]
			as desired.
		\end{proof}

		\subsubsection*{Proof of Theorem~\ref{thm:neg}}
		\begin{proof}
			We construct the following market:
			
			Let there be two schools $\{s_1,s_2\}$ with $q_{s_1} = q_{s_2} = 1$ and two students $\{i_1,i_2\}$. School $s_2$ introduces scholarship level $m$ with capacity $q^m_{s_2}=1$, thereby adding the contracts $(i_1,s_2,m)$ and $(i_2,s_2,m)$. The augmented preferences of the students and the priorities of the schools are as follows:
			\begin{table}[H]
				\centering
				\begin{minipage}{0.4\textwidth}
					\centering
					\begin{tabular}{c|c}
						$\succ^A_{i_1}$ & $\succ^A_{i_2}$ \\
						\hline
						$s_1$ & $(s_2,m)$ \\
						$(s_2,m)$ & $s_1$ \\
						$s_2$ & $s_2$ \\
					\end{tabular}      
				\end{minipage}        
				\begin{minipage}{0.4\textwidth}
					\centering
					\begin{tabular}{c|c}
						$\pi_{s_1}$ & $\pi_{s_2}$\\
						\hline
						$i_2$ & $i_1$ \\
						$i_1$ & $i_2$ \\ 
					\end{tabular}            
				\end{minipage}    
			\end{table}
			
			Before school $s_2$ introduces scholarship level $m$, both students prefer $s_1$ to $s_2$. Therefore, the matching from the student-proposing DA mechanism is:
			\begin{center}
				$ \mu = 
				\begin{pmatrix}
					s_1 & s_2 \\
					i_2& i_1 \\
				\end{pmatrix}
				$
			\end{center}
			
			After $s_2$ introduces scholarship level $m$, student $i_2$ proposes the scholarship contract to $s_2$. This creates a vacancy at $s_1$, and $i_1$ matches with $s_1$ instead of $s_2$. Therefore, the stable matching generated by the student-proposing DA mechanism is:
			\begin{center}
				$ \mu' = 
				\begin{pmatrix}
					s_1 & s_2 \\
					i_1 & (i_2,m) \\
				\end{pmatrix}
				$
			\end{center}
			Thus, $\iota(\mu_{s_2})$ strictly Gale-dominates $\iota(\mu'_{s_2})$. Notice that this works for any choice rule that is induced by the priority $\pi_{s_i}$, $i = 1, 2$.
		\end{proof}

		\subsubsection*{Proof of Theorem~\ref{thm:commonrank}}
		
		\begin{proof}
			Write the common merit order as $i_1,i_2,\ldots,i_n$, where $i_1$ has the highest merit. A contract $x$ for student $i_r$ is \emph{available at her turn} if adding $x$ to the assignments of $i_1,\ldots,i_{r-1}$ satisfies the total and scholarship-level capacities. The merit serial dictatorship calls students in this order and assigns each student her most-preferred acceptable contract available at her turn.

			We use two claims. The first connects this serial dictatorship to the mechanism in the theorem.

			\begin{claim}\label{claim:sd-common}
			Under common ranking and universal scholarship eligibility, the student-proposing DA outcome based on $(\ch_s^{Sc})_{s\in\mathcal S}$ is the merit-serial-dictatorship outcome.
			\end{claim}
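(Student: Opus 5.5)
I will prove Claim~\ref{claim:sd-common} by showing that the merit-serial-dictatorship outcome, call it $\nu$, is stable with respect to $(\ch_s^{Sc})_{s\in\mathcal S}$ and that every stable matching equals $\nu$. The DA outcome is stable by Proposition~\ref{prop:characterizeMech} (its if part), so it must then coincide with $\nu$. An alternative would be to trace the DA algorithm step by step. Uniqueness of the stable matching is cleaner, because it avoids tracking the order of proposals and the failure of path independence of $\ch^{Sc}_s$.

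\emph{Uniqueness} goes by strong induction along the common order $i_1,\ldots,i_n$. Let $\mu$ be stable and suppose $\mu_{i_\ell}=\nu_{i_\ell}$ for all $\ell<r$. Set $x:=\nu_{i_r}$, which is available at $i_r$'s turn given the assignments of higher-merit students.

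First suppose $x\succ_{i_r}\mu_{i_r}$, with $x=(i_r,s,m)$. I will show that $\{x\}$ blocks $\mu$, that is, $x\in\ch^{Sc}_s(\mu_s\cup\{x\})$, using the common ranking.

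\begin{itemize}
\item At level $m$, the only contracts competing with $x$ are those of students already in $\mu_s$. Students of higher merit hold exactly their $\nu$-contracts, and availability of $x$ means that they leave room at level $m$ and in total. Students of lower merit are beaten by $i_r$ in the merit comparison.
\item A student of higher merit selected at a higher level does not take the seat, because her seat was already counted in the availability check.
\item For $m=0$, residual capacity is also reduced by lower-merit scholarship holders in $\mu_s$. This is the delicate point. Here I use universal eligibility together with feasibility: $\ch^{Sc}_s$ fills the total capacity with the top-merit students, so if $i_r$ is rejected at level $0$ the school is full. If all seats are then held by higher-merit students, this contradicts availability. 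If some seat is held by a lower-merit student $j$ at level $m'>0$, then $x$ was rejected at level $0$ while the school is full.
\end{itemize}

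I expect this $m=0$ case, where lower-merit scholarship holders consume residual capacity, to be the main obstacle. My plan for it is to argue on the contract $(i_r,s,m')$. It exists by universal eligibility, and by within-school monotonicity $i_r$ prefers it to $x$, so it too is preferred to $\mu_{i_r}$. It is also available at her turn, because $j$'s seat frees a level-$m'$ slot relative to the higher-merit students. This contradicts the choice of $x$ as $i_r$'s favorite available contract. So $\{x\}$ blocks $\mu$ in every case, contradicting stability.

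Next suppose $\mu_{i_r}\succ_{i_r}x$. Then $\mu_{i_r}$ was unavailable at $i_r$'s turn, so the higher-merit students $i_1,\ldots,i_{r-1}$, who hold the same contracts in $\mu$ by the induction hypothesis, already fill either its level or the total capacity at its school. Feasibility of $\mu$, which follows from individual rationality and Axiom~\ref{axiom:sch-feasibility}, rules this out.

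\emph{Stability of $\nu$} is checked in the same way. For school individual rationality, I apply $\ch^{Sc}_s$ to $\nu_s$: it is feasible and every contract survives. For a blocking set $Z$, take the highest-merit student in $\iota(Z)$. Her contract in $Z$ is preferred to her $\nu$-contract, so it was unavailable at her turn. The contracts that exhausted it belong to higher-merit students in $\nu_s$, and since $\ch^{Sc}_s(\nu_s\cup Z)$ processes by merit, those students retain priority. Therefore her contract is rejected, so $Z$ does not block.
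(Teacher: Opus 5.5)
Your overall route is the paper's: show that the serial-dictatorship outcome $\nu$ is stable, show by induction along the common order that every stable matching equals $\nu$, and conclude from stability of the DA outcome. Your uniqueness step is sound, including the level-$0$ case, where your argument via $(i_r,s,m')$ plays the role of the paper's property $(*)$. The gap is in the stability of $\nu$, which you treat as ``checked in the same way'' although it has its own delicate case.

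Let $x=(i_r,s,\ell)$ with $\ell>0$ be the contract in $Z$ of the highest-merit member of $\iota(Z)$. Suppose $x$ was unavailable at her turn only because higher-merit students, some holding zero-scholarship contracts at $s$, had exhausted the total capacity $q_s$, while level $\ell$ still had room. Then your sentence ``those students retain priority'' is false. The rule $\ch^{Sc}_s$ fills level $\ell$ before level $0$ and does not check total capacity at positive levels. So $x$ is chosen from $\nu_s\cup\{x\}$ and displaces a higher-merit zero-scholarship holder, and $\{x\}$ blocks $\nu$.

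Your stability argument never invokes universal scholarship eligibility or within-school monotonicity, and it cannot succeed without them. In the example following Theorem~\ref{thm:commonrank}, before $s_b$ introduces level $m$, the serial dictatorship yields $\{(i,s_a,0),(j,s_b,0)\}$, and $\{(j,s_a,h)\}$ blocks it in exactly this way.

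The missing step is the paper's $(*)$. Suppose level $\ell$ at $s$ is not full when $i_r$ is called but the total capacity is. Because total scholarship capacity at $s$ is at most $q_s$, some higher-merit student $i_t$ holds $(i_t,s,0)$. At her turn, $(i_t,s,\ell)$ existed by universal eligibility and was preferred to $(i_t,s,0)$ by within-school monotonicity. It was also available: level $\ell$ had room then, and so did total capacity, since she took a seat at $s$. This contradicts her choice.

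Hence a positive-level contract in $Z$ can be obstructed only by its own level capacity. That capacity is filled by higher-merit students, and they keep their level-$\ell$ seats in $\ch^{Sc}_s(\nu_s\cup Z)$ because $Z$ contains none of their contracts. For $\ell=0$ your merit argument does go through, because lower-merit members of $Z$ entering positive levels can only shrink the residual capacity.
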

			\begin{proof}
			Let $\nu$ be the serial-dictatorship outcome. We first show that $\nu$ is stable. It is individually rational by construction, and $\ch_s^{Sc}(\nu_s)=\nu_s$ because $\nu_s$ is feasible and contains at most one contract per student.

			We first record an implication of universal scholarship eligibility and within-school monotonicity. Let $H_s^r$ be the contracts assigned to $s$ before student $i_r$ is called, and let $H_s^{r,\ell}:=\{x\in H_s^r:\mathbf m(x)=\ell\}$. For every contract $(i_r,s,\ell)$,
			\[
			(i_r,s,\ell)\text{ is unavailable at }i_r\text{'s turn}
			\quad\Longrightarrow\quad
			\begin{cases}
			|H_s^r|=q_s, & \ell=0,\\
			|H_s^{r,\ell}|=q_s^\ell, & \ell>0.
			\end{cases}
			\tag{$*$}
			\]
			Only the positive-level implication requires explanation. If the total capacity were filled while the level-$\ell$ capacity were not, then, because $\sum_{h\in M^s}q_s^h\leq q_s$, at least one student in $H_s^r$ would hold a zero-scholarship contract. When that higher-merit student was called, her level-$\ell$ contract was available and preferred to her zero-scholarship contract, a contradiction. In particular, if a student is assigned a zero-scholarship contract at $s$, every positive scholarship level at $s$ is already filled by higher-merit students when she is called.

			Suppose that a nonempty set $Z$ blocks $\nu$, and let $i_r$ be the highest-merit student represented in $Z$. Write $x=(i_r,s,\ell)$ for her contract in $Z$. Because $i_r$ prefers $x$ to $\nu_{i_r}$, contract $x$ was not available at her turn. By $(*)$, contracts of students ranked above $i_r$ fill the capacity that prevents $x$ from being selected. If $\ell=0$, every positive-level contract at $s$ is preferred by $i_r$ to $x$ and hence was also unavailable at her turn. By $(*)$, every positive level is therefore filled by higher-merit students. Consequently, lower-merit members of $Z$ cannot enter a positive level and displace a higher-merit zero-level student. Because $i_r$ is the highest-merit member of $Z$, adding $Z$ cannot remove the obstruction to $x$. Hence $x\notin\ch_s^{Sc}(\nu_s\cup Z)$, contradicting that $Z$ blocks $\nu$.

			Next, let $\widehat\nu$ be any stable matching. We show inductively that it equals $\nu$. Suppose the assignments of $i_1,\ldots,i_{r-1}$ agree, and let $x=(i_r,s,\ell)$ be the contract chosen by $i_r$ in the serial dictatorship. Under $\widehat\nu$, student $i_r$ cannot receive a contract that she prefers to $x$. Since $\widehat\nu$ agrees with $\nu$ on the assignments of all higher-merit students, any such contract assigned to $i_r$ under $\widehat\nu$ would have been available at her serial-dictatorship turn, contradicting the definition of $x$ as her most-preferred available contract.

			If $i_r$ receives a worse contract, then $x$ blocks $\widehat\nu$. Every level at $s$ that $i_r$ prefers to $\ell$ was unavailable at her turn and hence, by $(*)$, is already filled by higher-merit students. If $\ell>0$, contract $x$ is available at level $\ell$ after those higher-merit assignments and outranks every later student at that level. If $\ell=0$, every positive level at $s$ is filled by higher-merit students, so $\widehat\nu_s$ cannot contain a lower-merit student at a positive level; among the residual zero-level contracts, $x$ is available and outranks every later student. Therefore, in either case, $x\in\ch_s^{Sc}(\widehat\nu_s\cup\{x\})$, contradicting stability. It follows that $\widehat\nu_{i_r}=x$. This proves uniqueness. The DA outcome is stable by Proposition~\ref{prop:PathI} and the completion results used above, so it must equal $\nu$.
			\end{proof}

			The second claim records the comparative-static property of serial dictatorship that we need.

			\begin{claim}[Vacancy-chain claim]\label{claim:vacancy-chain}
			Fix capacities and a serial-dictatorship order. Suppose that level-$m$ contracts at $s_k$ were already introduced for every student ranked above $i_r$, but not for $i_r$ or any student ranked below $i_r$. Let $\nu^-$ and $\nu^+$ be the serial-dictatorship outcomes before and after $(i_r,s_k,m)$ is added, respectively. Then
			\[
			\iota(\nu^+_{s_k})\trianglerighteq^G\iota(\nu^-_{s_k}).
			\]
			\end{claim}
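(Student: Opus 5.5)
The plan is to compare the two serial dictatorships turn by turn and track the difference in occupied capacity between them. Students $i_1,\ldots,i_{r-1}$ face identical menus in both runs: availability at their turns depends only on earlier assignments, and no contract of theirs is added. So they receive identical contracts. If $i_r$ does not take $(i_r,s_k,m)$ under $\nu^+$, then every later student also faces an identical menu, $\nu^+=\nu^-$, and the claim is trivial.

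Assume instead that $i_r$ takes $y=(i_r,s_k,m)$ and that her old contract is $x_r=\nu^-_{i_r}$. After her turn, the occupancy vectors (total and per-level counts at each school) of the two runs differ by a single unit. The $+$ run has one extra seat used at $s_k$, both in total and at level $m$. The $-$ run has one extra seat used at $\sigma(x_r)$, in total and at level $\mathbf m(x_r)$. This vanishes if $x_r=\emptyset$, and it partly cancels if $\sigma(x_r)=s_k$, which is an upgrade within $s_k$ and leaves $\iota(\cdot_{s_k})$ unchanged at this step.

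I will maintain the invariant: \emph{after each turn, the $+$ occupancy equals the $-$ occupancy plus one unit at some place $A$ and minus one unit at some place $V$}. Here a place is a school together with a level, $A$ and $V$ may coincide (in which case the runs have resynchronized), and at most one student per turn is assigned differently in the two runs.

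For a later student $j$, only the availability of contracts at $A$ or $V$ can differ. Level-$m$ contracts at $s_k$ do not exist for $j$, so at $s_k$ the discrepancy concerns only the total capacity, and hence only zero-level contracts. By $(*)$, whenever $j$ would take a zero-level contract at $s_k$, the positive levels are already full by higher-merit students. There are three cases.
\begin{itemize}
\item[(i)] $j$ chooses the same contract in both runs, and the invariant is unchanged.
\item[(ii)] $j$ takes, in the $+$ run, a contract made available by the vacancy $V$. She then vacates her $-$ run seat, which becomes the new $V$.
\item[(iii)] $j$'s $-$ run contract lies at $A$ and is unavailable in the $+$ run. She takes something else, and that seat becomes the new $A$.
\end{itemize}
Each case is checked using the fact that $j$'s preference ranking is fixed, so she picks the top available contract in each run. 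Consequently the sets of students at $s_k$ change along a single chain, with at most one entrant or leaver at $s_k$ per turn.

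Next comes the pairing. The first change at $s_k$ is $i_r$'s entry (unless she was already at $s_k$). A student $j$ can leave $s_k$ in the $+$ run, which is case (iii) with $A$ at $s_k$, only while the extra occupancy sits at $s_k$. That occupancy was placed there by an earlier, higher-merit entrant, namely $i_r$ or a student who arrived at $s_k$ through case (ii). Once $j$ leaves, $A$ moves away from $s_k$. So leavers and entrants alternate, and each leaver can be matched injectively to a distinct entrant of strictly higher merit. Using common ranking, $\pi_{s_k}$ agrees with the serial order. Hence $\iota(\nu^+_{s_k})$ is obtained from $\iota(\nu^-_{s_k})$ by replacing some members with higher-merit students and possibly adding one more. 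A short lemma shows that each such operation preserves $\trianglerighteq^G$: replacing $b$ by $a$ with $\pi(a)>\pi(b)$ weakly raises every order statistic, and adding an element keeps the size condition. This gives the claim.

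The main obstacle is verifying the invariant in case analysis (iii) when the discrepancy at $s_k$ interacts with both the total capacity and the level-$m$ capacity. In particular, I must show that a freed or extra seat at $s_k$ cannot cascade into two simultaneous differences. I would first handle this by invoking $(*)$ and within-school monotonicity: a student holding a zero-level seat at $s_k$ found every positive level there full, so a change in total capacity at $s_k$ affects only zero-level contracts, and the single-unit structure is preserved.
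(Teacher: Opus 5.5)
Your strategy is the paper's: run the two serial dictatorships side by side and keep their discrepancy to a single unit. Then show that entries into and exits from $s_k$ alternate, and pair each exit with an earlier, hence higher-merit, entrant. Your $A$/$V$ bookkeeping is more explicit than the paper's sketch, and $(*)$ with within-school monotonicity is the right tool for the one-unit invariant.

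The gap is in the alternation step. You assert that a student can leave $s_k$ only in case (iii), and that once she leaves, $A$ moves away from $s_k$. But exits also occur in case (ii). A student holding a seat at $s_k$ in the $-$ run may, in the $+$ run, take a contract made available by the vacancy $V$ at another school, even though her seat at $s_k$ is still open to her. For example, take $r=1$, $q_{s_k}=2$, $q^m_{s_k}=1$, a school $t$ with $q_t=1$, $(i_1,s_k,m)\succ_{i_1}(i_1,t,0)\succ_{i_1}(i_1,s_k,0)$ and $(i_2,t,0)\succ_{i_2}(i_2,s_k,0)$. Then $i_2$ is at $s_k$ before and at $t$ after. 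In such an exit $A$ stays at $s_k$ and $V$ moves onto $s_k$, so the location of $A$ does not tell you whether an unpaired entrant is available. Likewise, a case-(ii) arrival at $s_k$ fills $V$ rather than moving $A$, so it is not what ``placed'' the extra occupancy there. The paper's sketch explicitly lists this second kind of departure, one that ``carries the vacancy chain to a contract made available outside $s_k$''.

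To repair the step, track the net gap $\Delta=\mathbf{1}[A\text{ at }s_k]-\mathbf{1}[V\text{ at }s_k]$, as the paper does, and show that every exit of either type occurs at $\Delta=1$. The missing observation is that $s_k$ becomes inert once $A$ leaves it:
\begin{enumerate}
\item $A$ starts at $(s_k,m)$, and no later student has a level-$m$ contract at $s_k$. So $A$ can leave only through a case-(iii) displacement, which requires $s_k$ to be full in the $+$ run.
\item The displaced student then takes the last seat in the $-$ run, so $s_k$ is full in both runs and no later student is assigned there in either.
\end{enumerate}
Hence every change at $s_k$ happens while $A=(s_k,m)$, where $\Delta=1-\mathbf{1}[V\text{ at }s_k]$. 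An exit requires $V$ off $s_k$, and it either brings $V$ onto $s_k$ (case (ii)) or removes $A$ from $s_k$ for good (case (iii)); either way $\Delta$ goes from $1$ to $0$. An entry requires $V$ on $s_k$ and takes it off, so $\Delta$ goes from $0$ to $1$. This yields the alternation and your pairing.

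Two bookkeeping points also need adjusting. First, in these intermediate markets $(*)$ holds at every level except level $m$ at $s_k$, because students below $i_r$ cannot apply there. Second, after a zero-level holder at $s_k$ is displaced by $A=(s_k,m)$, the runs still differ in the level-$m$ count at $s_k$. Your invariant should therefore be stated modulo that count, on which no later student's options depend.
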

			\begin{proof}
			For each $t$, let $\nu^{-,t}$ and $\nu^{+,t}$ be the partial outcomes after students $i_1,\ldots,i_t$ have been called in the two runs, and define the occupancy gap
			\[
			\Delta_t
			:=
			|\iota(\nu^{+,t}_{s_k})|
			-
			|\iota(\nu^{-,t}_{s_k})|.
			\]
			The assignments of $i_1,\ldots,i_{r-1}$ coincide, so $\Delta_t=0$ for $t<r$. If $i_r$ does not choose the new contract, the two runs remain identical. Suppose that she chooses it. If she was not previously assigned to $s_k$, then $\Delta_r=1$; if she was already assigned to $s_k$, then $\Delta_r=0$.

			We now carry the vacancy created by this move down the merit order. Whenever a later student's assignment differs, either her assignment uses capacity released by an earlier difference, or her former assignment becomes unavailable because capacity was consumed by an earlier difference. After she moves, her former assignment becomes the next vacancy. Thus, possibly with unchanged students between successive links, every difference belongs to the vacancy chain initiated by $i_r$, and students appear along the chain in decreasing merit order.

			We prove inductively that
			\[
			\Delta_t\in\{0,1\}
			\tag{$**$}
			\]
			and that crossings of the chain into and out of $s_k$ alternate, beginning with an entry. The assertion holds immediately after $i_r$ is called. A link that does not cross the boundary of $s_k$ leaves $\Delta_t$ unchanged. An entry into $s_k$ consumes the vacancy at $s_k$ created by the preceding departure; except for the initial move of $i_r$, it therefore occurs only when the gap is zero and changes it to one. A departure from $s_k$ either responds to the additional unit of total capacity already used in the new run or carries the vacancy chain to a contract made available outside $s_k$. In either case, the chain must previously have entered $s_k$ without a subsequent departure. Hence a departure occurs only when the gap is one and changes it to zero. No student ranked below $i_r$ has a level-$m$ contract, so use of the new level cannot create an additional departure through its level capacity. Changes between pre-existing levels at $s_k$ do not cross the school's boundary and merely move the vacancy between levels. This proves $(**)$ and the alternation assertion.

			Pair each departure from $s_k$ with the immediately preceding unpaired entry. The alternation invariant makes these pairs distinct, and the order of the chain implies that the entrant has strictly higher merit than the departing student. Every unpaired change at $s_k$ is an entry. Extend this pairing to an injection from $\iota(\nu^-_{s_k})$ to $\iota(\nu^+_{s_k})$ by mapping students assigned to $s_k$ in both runs to themselves. Lemma~\ref{lem:gale-injection} then gives
			\[
			\iota(\nu^+_{s_k})\trianglerighteq^G\iota(\nu^-_{s_k}).
			\]
			\end{proof}

			\noindent
			\emph{(If part.)}
			Introduce the new scholarship contracts one at a time in common-merit order. Formally, let $\mu^0=\mu$, and, for each $r=1,\ldots,n$, let $\mu^r$ be the merit-serial-dictatorship outcome after the contracts
			\[
			(i_1,s_k,m),\ldots,(i_r,s_k,m)
			\]
			have been added. Assigning capacity $q_{s_k}^m$ to level $m$ before any of these contracts is added does not affect the outcome. Claim~\ref{claim:vacancy-chain} gives
			\[
			\iota(\mu^r_{s_k})\trianglerighteq^G
			\iota(\mu^{r-1}_{s_k})
			\qquad\text{for every }r.
			\]
			By Lemma~\ref{lem:gale-transitive},
			\[
			\iota(\mu^n_{s_k})\trianglerighteq^G
			\iota(\mu^0_{s_k}).
			\]
			By universal scholarship eligibility, $\mu^n=\mu'$, and Claim~\ref{claim:sd-common} identifies $\mu^0$ and $\mu^n$ with the two DA outcomes. Therefore,
			\[
			\iota(\mu'_{s_k})\trianglerighteq^G\iota(\mu_{s_k}).
			\]

			\medskip

			\noindent
			\emph{(Only-if part.)}
			We prove the contrapositive. If the priority profile does not have a common ranking, there are students $i,j$ and schools $s_a,s_b$ such that
			\[
			\pi_{s_a}(j)>\pi_{s_a}(i)
			\quad\text{and}\quad
			\pi_{s_b}(i)>\pi_{s_b}(j).
			\]
			Construct a market with $q_{s_a}=q_{s_b}=1$. School $s_a$ initially offers level $m$ with $q^m_{s_a}=1$, while $s_b$ introduces the same level with $q^m_{s_b}=1$. The new level is available to every student. Give $i$ and $j$ the augmented preferences
			\begin{align*}
			 i:\quad &(i,s_a,m)\succ_i(i,s_b,m)\succ_i(i,s_b,0)
			       \succ_i(i,s_a,0)\succ_i\emptyset,\\
			 j:\quad &(j,s_b,m)\succ_j(j,s_a,m)\succ_j(j,s_a,0)
			       \succ_j(j,s_b,0)\succ_j\emptyset.
			\end{align*}
			Place every omitted contract below the outside option, ordered consistently with within-school monotonicity, and do the same for all other students so that these contracts do not affect the construction. Before $s_b$ introduces level $m$, school $s_a$ admits $j$ and $s_b$ admits $i$. After the introduction, $s_a$ admits $i$ and $s_b$ admits $j$. Consequently,
			\[
			\iota(\mu_{s_b})=\{i\}\triangleright^G
			\{j\}=\iota(\mu'_{s_b}),
			\]
			because $s_b$ ranks $i$ above $j$. Thus, a priority profile without common ranking does not provide the guarantee in Theorem~\ref{thm:commonrank}.
		\end{proof}

		\subsection{Technical Lemmas}

		\begin{lemma}\label{lem:gale-injection}
		Let $A$ and $B$ be sets of students ranked by $\pi_s$. If there is an injection $g:B\to A$ such that
		\[
		\pi_s(g(i))\geq\pi_s(i)
		\qquad\text{for every }i\in B,
		\]
		then $A\trianglerighteq^G B$.
		\end{lemma}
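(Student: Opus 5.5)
The plan is to reduce the statement to a pointwise comparison of the two sorted lists, and to get that comparison by a counting argument on "how many members of $A$ are at least as good as a given member of $B$."

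Enumerate $A=\{a_1,\ldots,a_{|A|}\}$ and $B=\{b_1,\ldots,b_{|B|}\}$ in decreasing order of $\pi_s$, as in Definition~\ref{def:Galedominate}. Since $g$ is injective from $B$ into $A$, we get $|A|\geq|B|$ at once. That settles the size requirement of Gale domination.

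Next, fix $j\in\{1,\ldots,|B|\}$ and consider the top-$j$ students $b_1,\ldots,b_j$ of $B$. For each $\ell\leq j$ we have
\[
\pi_s(g(b_\ell))\geq\pi_s(b_\ell)\geq\pi_s(b_j).
\]
The images $g(b_1),\ldots,g(b_j)$ are $j$ distinct elements of $A$, and each has merit at least $\pi_s(b_j)$. So $A$ contains at least $j$ students with merit at least $\pi_s(b_j)$. Because $A$ is sorted in decreasing merit, the $j$-th highest element $a_j$ must satisfy $\pi_s(a_j)\geq\pi_s(b_j)$. Otherwise, at most $j-1$ elements of $A$ would have merit at least $\pi_s(b_j)$, which contradicts the count.

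This holds for every $j\leq|B|$, so together with $|A|\geq|B|$ it gives $A\trianglerighteq^G B$.

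The only step needing care is the counting step, specifically the translation from "at least $j$ elements of $A$ weakly exceed a threshold" to "the $j$-th order statistic of $A$ weakly exceeds it." Merit scores are distinct, so ties within $A$ do not arise, and the translation is routine. I do not expect a real obstacle here.
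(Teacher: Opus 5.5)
Your proof is correct and matches the paper's argument: the injection gives $|A|\geq|B|$, and the top $j$ students of $B$ map to $j$ distinct students of $A$, each with merit at least $\pi_s(b_j)$, which forces $\pi_s(a_j)\geq\pi_s(b_j)$. As a minor remark, the order-statistic step goes through even with ties, so you do not need the distinctness of merit scores there.
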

		\begin{proof}
		The injection implies $|A|\geq|B|$. List both sets in decreasing merit order. If the $j$-th student in $A$ had lower merit than the $j$-th student in $B$, the first $j$ students in $B$ would be mapped to $j$ distinct students in $A$ whose merit is at least that of the $j$-th student in $B$, a contradiction. Therefore, the $j$-th merit in $A$ is weakly higher than the $j$-th merit in $B$ for every $j\leq|B|$.
		\end{proof}

		\begin{lemma}\label{lem:gale-transitive}
		Gale domination is transitive.
		\end{lemma}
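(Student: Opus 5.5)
The plan is to prove transitivity directly from Definition~\ref{def:Galedominate}, working with the sorted enumerations of the three sets. Suppose $A\trianglerighteq^G B$ and $B\trianglerighteq^G C$, where $A,B,C\subseteq\iota(X_s)$ are ranked by the same score $\pi_s$. Enumerate each set in decreasing merit order as $a_1,\dots,a_{|A|}$, $b_1,\dots,b_{|B|}$, $c_1,\dots,c_{|C|}$. Because merit scores are distinct, these enumerations are unique, so the $j$-th element of each set is well defined. Everything then reduces to chaining inequalities index by index.

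The first step handles cardinality. From the two hypotheses we have $|A|\ge|B|$ and $|B|\ge|C|$, hence $|A|\ge|C|$.

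The second step handles the pointwise comparison. Fix $j\in\{1,\dots,|C|\}$. Since $j\le|C|\le|B|$, the index $j$ is admissible in the hypothesis $A\trianglerighteq^G B$, which gives $\pi_s(a_j)\ge\pi_s(b_j)$. The hypothesis $B\trianglerighteq^G C$ gives $\pi_s(b_j)\ge\pi_s(c_j)$. Combining the two yields $\pi_s(a_j)\ge\pi_s(c_j)$ for every $j\le|C|$, and therefore $A\trianglerighteq^G C$.

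No step is a genuine obstacle. The only points needing care are that the index range of the second comparison ($j\le|C|$) lies inside the range of the first ($j\le|B|$), which is exactly what $|B|\ge|C|$ guarantees, and that all three sets are compared using the same school's score $\pi_s$, which holds because in the application the sets are $\iota(\mu^r_{s_k})$ for the fixed school $s_k$. As an alternative, I could compose the rank-preserving injections $C\to B\to A$ that send the $j$-th element to the $j$-th element and then invoke Lemma~\ref{lem:gale-injection}. The direct index argument is shorter, so I would use that.
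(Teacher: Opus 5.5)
Your proof is correct and is essentially the paper's argument: both list the three sets in decreasing merit order, chain $|A|\geq|B|\geq|C|$, and chain $\pi_s(a_j)\geq\pi_s(b_j)\geq\pi_s(c_j)$ for every $j\leq|C|$. Your remark that $j\leq|C|\leq|B|$ keeps the first comparison within its admissible index range is a detail the paper leaves implicit.
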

		\begin{proof}
		Suppose $A\trianglerighteq^G B$ and $B\trianglerighteq^G C$, and list each set in decreasing merit order. Then $|A|\geq|B|\geq|C|$ and, for every $j\leq|C|$,
		\[
		\pi_s(a_j)\geq\pi_s(b_j)\geq\pi_s(c_j).
		\]
		Hence $A\trianglerighteq^G C$.
		\end{proof}

		\begin{lemma}\label{lemma:matchequalchoice}
			For every $s\in \mathcal{S}$, $\mu^{Sc}_s = \ch^{Sc}_s(D_s(\mu^{Sc}, \succeq_{\mathcal I})).$
		\end{lemma}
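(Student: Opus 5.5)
Write $\mu:=\mu^{Sc}$ and $D:=D_s(\mu,\succeq_{\mathcal I})$. The plan is to prove $\mu_s=\ch^{Sc}_s(D)$ by passing through the path-independent completion $\overline{\ch}^{Sc}_s$ of Proposition~\ref{prop:PathI}. The argument has three steps.

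\textbf{Step 1: locate $D$ inside the DA proposals.} Let $A_s$ be the set of all contracts proposed to $s$ at some step of the DA algorithm based on $(\ch^{Sc}_s)_{s\in\mathcal S}$. Each student proposes in decreasing order of her preference and ends at $\mu_i$. Hence every contract $x\in\mathcal X_s$ with $x\succ_{\iota(x)}\mu_{\iota(x)}$ was proposed and rejected, so $x\in A_s$. Moreover $\mu_s\subseteq A_s$ trivially. Since the weak order in $D$ only adds $\mu_s$ itself, we get $\mu_s\subseteq D\subseteq A_s$.

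\textbf{Step 2: show $\overline{\ch}^{Sc}_s(A_s)=\mu_s$.} As in the proof of Proposition~\ref{prop:characterizeMech}, I use \citet{hatfield2015hidden} together with Proposition~A.1 of \citet{Hatfield2021Stability}. These give that DA based on $\ch^{Sc}_s$ coincides with the cumulative-offer process based on $\overline{\ch}^{Sc}_s$, whose final choice at $s$ from the cumulative offer set $A_s$ is $\overline{\ch}^{Sc}_s(A_s)=\mu_s$. If I want to avoid citing this equivalence, I would argue directly by induction on DA steps. The claim would be that the tentatively held set equals $\overline{\ch}^{Sc}_s$ of all contracts proposed so far. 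This follows from path independence of $\overline{\ch}^{Sc}_s$, which lets rejected contracts be discarded, together with Lemma~\ref{lem:completion}. The subtle point is that whenever the completion picks two contracts of one student, DA and the completion could diverge. I expect this to be the main obstacle. To handle it, I would use the fact that a student has only one outstanding proposal at a time, and that rejected earlier proposals are never re-selected by substitutability of $\overline{\ch}^{Sc}_s$. Together these imply that the completion never selects two contracts of one student along the DA path.

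\textbf{Step 3: conclude.} Path independence implies consistency. Since $\overline{\ch}^{Sc}_s(A_s)=\mu_s\subseteq D\subseteq A_s$, consistency gives $\overline{\ch}^{Sc}_s(D)=\mu_s$. Because $\mu$ is feasible, $\mu_s$ contains at most one contract per student. Definition~\ref{def:completion} then forces $\ch^{Sc}_s(D)=\overline{\ch}^{Sc}_s(D)=\mu_s$, as required.
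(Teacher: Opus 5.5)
Your proof is correct, but it takes a somewhat different route from the paper's.

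\textbf{The paper's route.} The paper runs the cumulative-offer process based on $\ch^{Sc}_s$. It takes from the proof of Theorem A.2 in \citet{hatfield2015hidden} the step-by-step identity $\ch^{Sc}_s(P^t_s)=\overline{\ch}^{Sc}_s(P^t_s)$ and concludes that rejections are final. It then proves the exact equality $P_s=D_s(\mu^{Sc},\succeq_{\mathcal I})$, so that $\mu^{Sc}_s=\ch^{Sc}_s(P_s)$ is already the claim. Identifying the DA outcome with the cumulative-offer outcome tacitly uses Proposition A.1 of \citet{Hatfield2021Stability}.

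\textbf{Your route.} You establish only the sandwich $\mu_s\subseteq D\subseteq A_s$. Consistency of the completion, followed by the definition of completion, then carries $\overline{\ch}^{Sc}_s(A_s)=\mu_s$ down to $\ch^{Sc}_s(D)=\mu_s$. This spares you the reverse inclusion $A_s\subseteq D$, which is true but not needed.

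\textbf{The direct induction.} Your induction on DA steps is the more self-contained option, since it uses only Proposition~\ref{prop:PathI} and Lemma~\ref{lem:completion}. The obstacle you anticipate actually dissolves. The DA menu $X^t_s=H^{t-1}_s\cup N^t_s$ (contracts held from step $t-1$ plus new proposals) never contains two contracts of one student, because each student has at most one live contract at any step. Hence $\overline{\ch}^{Sc}_s(X^t_s)=\ch^{Sc}_s(X^t_s)$ follows directly from the definition of completion. Now write $P^t_s$ for the DA proposals to $s$ through step $t$. Path independence and the induction hypothesis $H^{t-1}_s=\overline{\ch}^{Sc}_s(P^{t-1}_s)$ give $\overline{\ch}^{Sc}_s(P^t_s)=\overline{\ch}^{Sc}_s\bigl(N^t_s\cup\overline{\ch}^{Sc}_s(P^{t-1}_s)\bigr)=\overline{\ch}^{Sc}_s(X^t_s)$. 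No appeal to substitutability is needed.

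\textbf{If you keep the cited route.} Outcome equivalence of DA and cumulative offer does not by itself identify their proposal sets. It is cleaner to apply your sandwich argument directly to the cumulative-offer set. Once rejections are final, students there also propose in decreasing order down to $\mu_i$, so that set contains $D$.
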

		\begin{proof}
    		Let \(P_s^t\) denote the set of contracts proposed to school \(s\) through step \(t\) of the cumulative-offer process based on \((\ch_s^{Sc})_{s\in\mathcal S}\). The proof of Theorem A.2 in \citet{hatfield2015hidden} establishes that, at every step \(t\),
    		\[
    		\ch_s^{Sc}(P_s^t)
    		=
    		\overline{\ch}_s^{Sc}(P_s^t).
    		\]
    		Because the proposal menus expand over time and \(\overline{\ch}_s^{Sc}\) is substitutable, a contract rejected from \(P_s^t\) by \(\overline{\ch}_s^{Sc}\) cannot be chosen from any later menu \(P_s^{t'}\), where \(t'\geq t\). The step-by-step equality then implies the same conclusion for \(\ch_s^{Sc}\). Hence, every rejection in the cumulative-offer process based on \((\ch_s^{Sc})_{s\in\mathcal S}\) is final.

    		Let \(P_s(\succeq_{\mathcal I})\) denote the set of all contracts proposed to school \(s\) during this process. Because \(\mu^{Sc}\) is its terminal outcome,
    		\[
        	\mu^{Sc}_s = \ch^{Sc}_s\bigl(P_s(\succeq_{\mathcal I})\bigr).
    		\]
			It therefore suffices to show that $P_s(\succeq_{\mathcal I}) = D_s(\mu^{Sc},\succeq_{\mathcal I}).$ Fix a contract \(x \in \mathcal X_s\). Suppose first that \(\mu^{Sc}_{\iota(x)} \neq \emptyset\). Every contract that student \(\iota(x)\) strictly prefers to \(\mu^{Sc}_{\iota(x)}\) is proposed before she proposes \(\mu^{Sc}_{\iota(x)}\), and \(\mu^{Sc}_{\iota(x)}\) itself is also proposed. Furthermore, \(\iota(x)\) never proposes a contract ranked below $\mu^{Sc}_{\iota(x)}$. Therefore,
    		\[
        	x\in P_s(\succeq_{\mathcal I}) \quad\Longleftrightarrow\quad x\succeq_{\iota(x)}\mu^{Sc}_{\iota(x)}.
    		\]

    		If \(\mu^{Sc}_{\iota(x)}=\emptyset\), then student \(\iota(x)\) is unmatched only after exhausting all contracts she finds acceptable. Thus,
    		\[
        	x\in P_s(\succeq_{\mathcal I}) \quad \Longleftrightarrow \quad x \succ_{\iota(x)} \emptyset \quad \Longleftrightarrow \quad x \succeq_{\iota(x)} \mu^{Sc}_{\iota(x)}.
    		\]
    		The last equivalence uses strict preferences and \(x\neq\emptyset\). Therefore, $P_s(\succeq_{\mathcal I}) = \{x\in\mathcal X_s: x\succeq_{\iota(x)}\mu^{Sc}_{\iota(x)}\} = D_s(\mu^{Sc},\succeq_{\mathcal I}).$

			Substituting this identity gives $\mu^{Sc}_s = \ch^{Sc}_s\bigl(D_s(\mu^{Sc},\succeq_{\mathcal I})\bigr),$ as desired.
		\end{proof}

		\begin{lemma}\label{lemma:demandcompletioneqm}
			Let $\succeq_{\mathcal{I}} \in \mathcal{R}^{WM}$, and let $\mu$ satisfy $\mu_s = \ch^{Sc}_s(D_s(\mu, \succeq_{\mathcal{I}}))$ for all $s \in \mathcal{S}$. Then, $$\mu_s = \ch^{Sc}_s(D_s(\mu, \succeq_{\mathcal{I}})) = \overline{\ch}^{Sc}_s(D_s(\mu, \succeq_{\mathcal{I}})).$$
		\end{lemma}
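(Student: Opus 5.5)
The plan is to show directly that $\overline{\ch}^{Sc}_s$ and $\ch^{Sc}_s$ select exactly the same contracts, level by level, when both are applied to the menu $D:=D_s(\mu,\succeq_{\mathcal I})$. The point is that within-school monotonicity makes $D$ special: the interaction between levels that separates the two rules never arises on such a menu. Once equality holds level by level, the first equality $\mu_s=\ch^{Sc}_s(D)$ is the hypothesis and the second follows.

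\emph{Key structural step.} Fix a student $i$ with $\mu_i=(i,s,h)\in\mu_s$, so $i$ is selected at level $h$ by $\ch^{Sc}_s(D)$. Take any available contract $(i,s,\ell)$ with $\ell<h$. By within-school monotonicity, $(i,s,\ell)\prec_i(i,s,h)=\mu_i$, so $(i,s,\ell)\notin D$. Hence, for every student selected at $s$, the menu $D$ contains only her contracts at $s$ at levels weakly above her assigned level. In particular, $D$ contains no lower-level or zero-level contract of any student chosen at a positive level. (Students not matched to $s$ are unconstrained; they play no role in the deletion step below.)

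\emph{Level-by-level induction.} I would process positive levels from highest to lowest, with the hypothesis that both rules have selected the same contracts at all higher levels. The original rule chooses the top $q^m_s$ students from $D^m$ after deleting the contracts of students already selected at higher levels. By the structural step, these students have no contracts in $D^m$, so the deletion removes nothing. Therefore the original rule's choice at level $m$ is exactly $T_m(D)$, the top $\min\{|D^m|,q_s^m\}$ contracts of $D^m$, which is the completion's choice in \eqref{eq:level-representation}. Because the completion's level choices are order independent, processing its levels in the same top-down order is harmless. Agreement at all positive levels gives the same residual capacity $r(D)$. Applying the structural step once more, no zero-level contract of a positive-level selectee lies in $D$, so the original rule's zero-level choice equals $T_0^{r(D)}(D)$. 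Combining the levels yields $\ch^{Sc}_s(D)=\overline{\ch}^{Sc}_s(D)$.

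\emph{Main obstacle.} The delicate point is the base of the structural step. It must use $\mu_s=\ch^{Sc}_s(D)$, i.e.\ the fact that the students selected at higher levels by $\ch^{Sc}_s(D)$ are exactly those whose $\mu$-assignment is at $s$ at that level, and it must apply within-school monotonicity to their own assigned contract. I expect this to go through cleanly, since $\mu$ is a matching with at most one contract per student. As an alternative route, one could show that $\overline{\ch}^{Sc}_s(D)$ never picks two contracts of the same student and then invoke Lemma~\ref{lem:completion}. That route relies on the same monotonicity observation, so I would present the direct induction instead.
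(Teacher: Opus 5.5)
Your proposal is correct and follows essentially the same argument as the paper. Both proofs use $\mu_s=\ch^{Sc}_s(D)$ and within-school monotonicity to show that no student selected at a higher level has a lower-level contract in $D$, so the deletion step of $\ch^{Sc}_s$ removes nothing and the two rules coincide level by level, including at the zero level with the same residual capacity.
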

		\begin{proof}
			Let $D:=D_s(\mu,\succeq_{\mathcal I}).$ At the highest scholarship level, \(\ch_s^{Sc}\) and \(\overline{\ch}_s^{Sc}\) select the same highest-merit contracts. Every contract selected by \(\ch_s^{Sc}\) at that level belongs to \(\mu_s\). If \(x_i^m=(i,s,m) \in \mu_s\), then \(\mu_i=x_i^m\). By within-school monotonicity, every contract \(x_i^\ell=(i,s,\ell)\) with \(\ell<m\) is strictly worse than \(\mu_i\). Therefore, $x_i^\ell\notin D.$ Thus, none of the students selected at the highest level has a lower-level contract in \(D\).
			
			Proceed inductively. Suppose the two rules have made the same selections at every level above \(m\). The original rule removes the lower-level contracts of students selected at those higher levels. All such contracts are already absent from \(D\). Hence, at level \(m\), the two rules face the same set of contracts and make the same merit-based selection. The same argument applies recursively at every positive scholarship level and at the zero-scholarship level. Therefore,
			\[
    		\overline{\ch}_s^{Sc}(D) = \ch_s^{Sc}(D) =	\mu_s.
			\]
		\end{proof}

		\begin{lemma}\label{lem:sg-irc}
			\(\ch_s^{Sc}\) satisfies irrelevance of rejected contracts.
		\end{lemma}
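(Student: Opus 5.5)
We must show: for all $X\subseteq\mathcal X_s$ and $Y$ with $Y\cap \ch_s^{Sc}(X)=\emptyset$ (i.e., $Y\subseteq X\setminus\ch_s^{Sc}(X)$), we have $\ch_s^{Sc}(X\setminus Y)=\ch_s^{Sc}(X)$. Equivalently, by induction on $|Y|$, it suffices to remove one rejected contract $y\in X\setminus \ch_s^{Sc}(X)$ at a time and show that the chosen set is unchanged. Removing a rejected contract keeps the remaining rejected contracts rejected once we know the choice is unchanged, so the induction goes through.

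The plan follows the level-by-level structure of the algorithm, just as in the only-if part of Proposition~\ref{prop:characterization}. Let $C=\ch_s^{Sc}(X)$ and $X'=X\setminus\{y\}$ with $y=(i,s,\ell)\notin C$. We process positive levels from highest to lowest. Assume that the selections at all levels above some level $m$ coincide under $X$ and $X'$. Then the sets of students removed before level $m$ coincide, and the available level-$m$ contracts under $X'$ equal those under $X$, except that $y$ is missing when $m=\ell$ and $i$ was not selected at a higher level.

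The key step is the elementary top-$k$ observation already used in Lemma~\ref{lem:completion}: deleting an element that is not among the top $k$ of a strictly ordered set does not change the top-$k$ selection. Whenever $y$ is still present in the level-$\ell$ pool, it is not among the top $q_s^\ell$ students, because $y$ was rejected and $i$ was not removed earlier. Hence level $\ell$ selects the same students. If instead $i$ was selected at a higher level, $y$ was already removed from the pool under $X$, so the pools coincide exactly. Other levels are unaffected.

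Once all positive levels agree, the residual capacity $q_s-|C_f|$ is the same, and the zero level is handled by the same top-$k$ argument with $k=q_{\text{remain}}$. Together these steps give $\ch_s^{Sc}(X')=C$.

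The only delicate point is the bookkeeping at level $\ell$: we must confirm that $y$'s rejection means precisely that it lies outside the top-$q$ set of the level-$\ell$ pool, or else was pre-removed. This follows directly from the algorithm's definition of $I^*_m$ and $X_{\text{avail}}$. No other obstacle is expected; in particular, the failure of path independence (Proposition~\ref{prop:PathI}) does not interfere, since we only delete rejected contracts and never add contracts.
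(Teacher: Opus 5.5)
Your proposal is correct and follows essentially the same route as the paper: you delete one rejected contract at a time and argue level by level that higher levels are untouched. At the deleted contract's level, the contract is either already removed (its student was selected higher) or lies outside the selected set, so the selection, the removed students, and all lower-level and residual choices are unchanged; iterating the single deletion then gives the general statement. The only difference is phrasing: you state the key step via the top-$k$ observation from Lemma~\ref{lem:completion}, whereas the paper says the level's capacity is filled by higher-merit students.
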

		\begin{proof}
		Fix \(X\subseteq\mathcal X_s\) and \(x\in X\setminus\ch_s^{Sc}(X)\). Let $Y:=X\setminus\{x\}.$ We show that $\ch_s^{Sc}(Y)=\ch_s^{Sc}(X).$
		Write \(m=\mathbf m(x)\) and \(i=\iota(x)\). The presence or absence of \(x\) cannot affect the choices made at scholarship levels strictly above \(m\), so the same students are selected at those levels under \(X\) and \(Y\).

		If student \(i\) is selected at a level above \(m\), then the algorithm removes \(x\) before processing level \(m\). Thus, deleting \(x\) from the initial menu has no effect on any choice. Otherwise, \(x\) remains available when level \(m\) is processed. Because \(x\notin\ch_s^{Sc}(X)\), the capacity at level \(m\) must be filled by contracts associated with students having higher merit than \(i\). Removing \(x\) therefore does not change the contracts selected at level \(m\). Consequently, the same students and their lower-level contracts are removed under \(X\) and \(Y\), so all subsequent scholarship-level choices are also identical.

		The same argument applies when \(m=0\), using the residual total capacity after the positive-scholarship choices have been made. This proves the result for the deletion of one rejected contract. 
		
		Now let $\ch_s^{Sc}(X)\subseteq Y\subseteq X.$ Since the contract set is finite, remove the contracts in \(X\setminus Y\) one at a time. At every deletion the chosen set remains \(\ch_s^{Sc}(X)\), so every contract deleted at a subsequent step remains rejected. Hence
		\[
		\ch_s^{Sc}(Y)=\ch_s^{Sc}(X).
		\]
		Therefore, \(\ch_s^{Sc}\) satisfies irrelevance of rejected contracts.
		\end{proof}

		\begin{lemma}\label{lem:wm-admissible-subclass}
			The preference domain \(\mathcal R^{WM}\) is a subclass of the full student-preference domain in the sense of \citet[Theorem 5]{Hatfield2021Stability}.
		\end{lemma}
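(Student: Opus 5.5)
The plan is to recall the precise notion of an admissible subclass used in \citet[Theorem 5 and Appendix B.4]{Hatfield2021Stability}, and then check each defining requirement directly for $\mathcal R^{WM}$.

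As I understand their definition, a restricted domain qualifies if three things hold.
\begin{enumerate}
\item It has a product structure across students.
\item For each student it is nonempty and closed under the preference manipulations that their proof of Theorem 1b applies to a single student's report.
\item Those manipulations are essentially of two kinds: \emph{truncations}, which move the outside option $\emptyset$ up to some position while keeping the relative order of all contracts; and reports that rank a chosen set of contracts at the top while preserving their original relative order (the ``$x$-promotions'' used to expose a profitable deviation or a blocking contract).
\end{enumerate}
I will first restate these requirements in the notation of Section~\ref{sec:model}.

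The product structure is immediate, since $\mathcal R^{WM}:=\prod_{i\in\mathcal I}\mathcal R_i^{WM}$ by definition. Nonemptiness is also immediate: within each school, list contracts by decreasing scholarship level and place $\emptyset$ anywhere.

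Closure under truncation is the next step. Within-school monotonicity is a condition only on the relative order of pairs of contracts $(i,s,m_\ell),(i,s,m_k)$. It is imposed regardless of whether those contracts lie above or below $\emptyset$. A truncation changes only the position of $\emptyset$, so every such pair keeps its order and the truncated preference remains in $\mathcal R_i^{WM}$.

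Closure under order-preserving promotions is handled the same way. If the promoted set $T$ is moved to the top with its internal order intact, and the remaining contracts also keep their internal order, then any violating pair would have to consist of a lower-level contract in $T$ and a higher-level contract at the same school outside $T$.

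The main obstacle is exactly this case: the construction in \citet{Hatfield2021Stability} may want to make a single contract $x=(i,s,\ell)$ top-ranked, or the unique acceptable contract, while a higher-level contract $(i,s,\ell')$ with $\ell'>\ell$ exists. Such a report lies outside $\mathcal R_i^{WM}$.

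My first attempt is to replace the forbidden report with its ``WM-closure'': promote $x$ together with all higher-level contracts of $i$ at $s$, ordered by decreasing level, directly on top. I would then argue that this substitution does not affect the step in their proof where the report is used. The argument runs as follows.
\begin{enumerate}
\item The step in their proof only needs the student's outcome to be at least $x$, or the other students' outcomes to be unchanged.
\item Suppose the added higher-level contracts are rejected. By Lemma~\ref{lem:sg-irc} (irrelevance of rejected contracts for $\ch^{Sc}_s$), their presence in the cumulative offer process does not alter $s$'s choices.
\item Suppose instead one of them is accepted. Then the student obtains a contract she prefers to $x$ under every WM preference, which is the direction the argument needs.
\end{enumerate}

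If this does not fit their definition literally, the fallback is to verify the condition stated in Appendix B.4 of \citet{Hatfield2021Stability} directly. That condition requires the class to contain the truncations and the ``$x$-tops'' that are consistent with the class. The point to confirm is that, for $\mathcal R^{WM}$, this family is exactly the WM-closures described above.
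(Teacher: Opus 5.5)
Your nonemptiness and truncation-closure checks are exactly the substantive checks in the paper's proof. However, the proposal does not establish the lemma, because it works with a guessed definition that leaves out the relevant structural condition and adds one that is not there. In the sense the paper uses, a subclass is built school by school. For each student $i$ and school $s$ one takes a class $\mathcal Q_i^s$ of strict preferences over $(\mathcal X_i\cap\mathcal X_s)\cup\{\emptyset\}$. This class must contain a preference with some acceptable contract (so $\emptyset$ cannot simply be placed ``anywhere'') and must be closed under truncation. The admissible domain for $i$ is then the one \emph{generated} by $(\mathcal Q_i^s)_{s\in\mathcal S}$: all preferences over $\mathcal X_i\cup\{\emptyset\}$ whose restriction to each school lies in $\mathcal Q_i^s$, with comparisons across schools left unrestricted.

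The step you are missing is the identification of $\mathcal R_i^{WM}$ with the domain generated by the local within-school-monotone classes. This holds precisely because within-school monotonicity only compares contracts at the same school. You state that observation when checking truncation, but you never use it to show that $\mathcal R_i^{WM}$ has the required school-by-school form. The ``product structure across students'' you verify is not the relevant product structure.

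Conversely, closure under order-preserving promotions is not among the conditions the paper verifies, so the ``main obstacle'' is self-inflicted. Your remedy would not resolve it even if it were required. Replacing a forbidden report by its WM-closure does not show that $\mathcal R^{WM}$ belongs to a class. It is instead a claim that the proof of Theorem 1b in \citet{Hatfield2021Stability} survives a modification of its deviations, and your sketch of that claim rests on unproven steps. Lemma~\ref{lem:sg-irc} only says that deleting a rejected contract from a fixed menu leaves the choice unchanged. That alone does not show that extra proposals, made in a different order, leave the cumulative-offer outcome and the other students' assignments unchanged. Likewise, ``she obtains something better than $x$'' need not be what a given step of their argument uses. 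Drop the promotion discussion and replace it with the local-generation check.
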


		\begin{proof}
		For each student \(i\) and school \(s\), let \(\mathcal P_i^s\) be the set of strict preferences over
		\((\mathcal X_i\cap\mathcal X_s)\cup\{\emptyset\}\), and define
		\[
		\mathcal Q_i^s
		:=
    	\left\{
		\succ_i^s\in\mathcal P_i^s:
		(i,s,m')\succ_i^s(i,s,m)
		\text{ whenever }m'>m
    	\right\}.
		\]
		For every nonempty local contract set, \(\mathcal Q_i^s\) contains a preference under which some contract is acceptable. Moreover, truncating a preference in \(\mathcal Q_i^s\) changes only the position of the outside option and preserves the relative ranking of all contracts. Consequently, every such truncation remains in \(\mathcal Q_i^s\). Thus, \(\mathcal Q_i^s\) is a subclass of \(\mathcal P_i^s\) in the sense of \citet{Hatfield2021Stability}.

		Let \(\mathcal Q_i\) be the preference domain generated by \((\mathcal Q_i^s)_{s\in\mathcal S}\). Since within-school monotonicity imposes exactly these school-by-school restrictions and places no restriction on comparisons across different schools,
		\[
		\mathcal Q_i=\mathcal R_i^{WM}.
		\]
		Therefore,
		\[
		\mathcal R^{WM}
		=\prod_{i\in\mathcal I}\mathcal R_i^{WM}
		=\prod_{i\in\mathcal I}\mathcal Q_i
		\]
		is a subclass of the full preference domain.
		\end{proof}

	\end{appendices}

\end{document}